\newcolumntype{T}[1]{>{\centering\arraybackslash}p{#1}}
\newtheorem{theorem}{Theorem}
\newtheorem{lemma}{Lemma}
\newtheorem{prop}{Proposition}
\theoremstyle{remark}
\newtheorem{remark}{Remark}
\begin{document}
\title{
Optimizing Task-Specific Timeliness With Edge-Assisted Scheduling for Status Update
}
\author{Jingzhou Sun, Lehan Wang, Zhaojun Nan, Yuxuan Sun, \\Sheng Zhou, Zhisheng Niu~\IEEEmembership{Fellow,~IEEE}
    \thanks{
    This work was sponsored in part by the Natural Science Foundation of China (No. 62341108, No. 62022049, No. 62111530197, No. 62301024, No. 62221001), Hitachi Ltd, the Beijing Natural Science Foundation under grant L222044, the Fundamental Research Funds for the Central Universities under grant 2022JBXT001, and the Talent Fund of Beijing Jiaotong University under grant 2023XKRC030. (\emph{Corresponding author: Sheng Zhou}).


    J. Sun, L. Wang, Z. Nan S. Zhou, and Z. Niu are with Beijing National Research Center for Information Science and Technology, Tsinghua University, Beijing 100084, China. (Emails: sunjz14@tsinghua.org.cn, \{wang-lh19@mails.,nzj660624@mail.,sheng.zhou@, niuzhs@\}tsinghua.edu.cn. )

    Y. Sun is with the School of Electronic and Information Engineering, Beijing Jiaotong University, Beijing 100044, China. (Email: yxsun@bjtu.edu.cn)
 }
}
    \maketitle

\begin{abstract}
Intelligent real-time applications, such as video surveillance, demand intensive computation to extract status information from raw sensing data. This poses a substantial challenge in orchestrating computation and communication resources to provide fresh status information. 
In this paper, we consider a scenario where multiple energy-constrained devices served by an edge server. To extract status information, each device can either do the computation locally or offload it to the edge server. A scheduling policy is needed to determine when and where to compute for each device, taking into account communication and computation capabilities, as well as task-specific timeliness requirements. To that end, we first model the timeliness requirements as general penalty functions of Age of Information (AoI).
A convex optimization problem is formulated to provide a lower bound of the minimum AoI penalty given system parameters.
Using KKT conditions, we proposed a novel scheduling policy which evaluates status update priorities based on communication and computation delays and task-specific timeliness requirements.
The proposed policy is applied to an object tracking application and carried out on a large video dataset. Simulation results show that our policy improves tracking accuracy compared with scheduling policies based on video content information.
\end{abstract}

\begin{IEEEkeywords}
Age of Information, edge computing, task-oriented communication, object tracking
\end{IEEEkeywords}

\section{Introduction}

Fueled by recent advances in wireless communication and computation technologies, cyber-physical network applications have evolved to intelligently connect the physical and cyber worlds, enabled by fully utilizing computation resources scattered over communication networks. These applications, including autonomous driving, remote healthcare, and real-time monitoring, rely on collecting raw sensing data about the time-varying physical environment, extracting valuable status information through computation, and generating control demands based on the status information. However, since the environment changes constantly, control quality degrades until a new status update is made. Hence, the performance of these applications heavily depends on the freshness of status information provided by the network system, which necessitates a shift of focus from solely conveying information bits to providing timely information for certain tasks under service, named as task-oriented communications\cite{gunduz2022beyond}.

One key challenge in this shift is how to effectively orchestrate communication and computation resources in the system, taking account of task-specific timeliness requirements. Over the past decade, edge computing has received much attention due to its potential in providing timely information processing service\cite{chen2018computation}. 
This trend motivates design of schemes that can adaptively offload the computation burden to the edge side or execute it locally, according to the capabilities of both communication and computation resources\cite{mao2016dynamic,dinh2017offloading,kao2017hermes}.

In this work, we study a system consisting of multiple energy-constrained devices, where each device observes a time-varying process and generates tasks that require computation to extract status information about the underlying process. These computation-intensive tasks can be executed on-device or offloaded to an edge server for assistance. Based on the latest status information, devices can decide control actions. And the control quality depends on the freshness of status information. For example, in Augmented Reality (AR) applications, a headset needs to continuously capture the position of certain objects and render virtual elements\cite{chen2019deep}. If the position information becomes stale, the virtual overlay may not align with the physical objects. In this case, the headset can update position status by running object detection algorithms, which can be offloaded to an edge server when the network is in good condition, or done locally at the cost of higher latency and energy consumption. 

Local computing usually takes longer time than that on the edge server side \cite{wang2020joint}, and thus the problem of where to compute seems trivial for single-device system. However, in a multi-device scenario where devices share the wireless channel, some devices might be more suitable than others to offload, probably due to factors such as better channel conditions. Because of limited communication resources, a \emph{scheduling policy} is needed to determine \emph{when devices should generate computation tasks and where computation should be executed to extract status information}.

\subsection{Related Work}
In recent years, \emph{Age of Information} (AoI) has provoked great interest as a metric to quantify the freshness of information \cite{kaul2012real}. In a status update system, AoI measures the elapsed time since the generation of the freshest received information and characterizes the freshness of the status information used for decision-making. Unlike metrics such as delay and throughput, which focus on packet-level performance, AoI provides a system-level view. 

There has been a growing body of research on designing device scheduling policies based on AoI. Among them, weighted average AoI is widely adopted as the optimization target. Periodic status sampling is investigated in \cite{kadota2018scheduling}, and stochastic sampling is considered in \cite{sun2019closed}, where Whittle's index has been shown to enjoy a close-to-optimal performance. Energy harvesting system is studied in \cite{bacinoglu2017scheduling,arafa2017age,arafa2019age}. 

Besides weighted average AoI, nonlinear functions of AoI have also gained attention. In networked control systems with estimation error as the control performance, it has been found that the error can be expressed as a nonlinear function of AoI \cite{champati2019performance,mamduhi2020freshness} for linear time-invariant system, if the sampling time is independent of the underlying status. In the single device case with a general monotonic AoI penalty function, the optimal sampling problem is studied in \cite{sun2019sampling,sun2019sampling2}. For the multi-device case, a scheduling policy using Whittle's index is proposed in \cite{tripathi2019whittle}. A threshold-type policy is derived in \cite{jiang2021analyzing} based on the steady distribution of AoI. A more recent work \cite{shisher2022does} shows that AoI functions can be applied to time-series prediction problem. However, most of these studies have ignored the role of computation in providing fresh information.

As for communication and computation co-design for AoI under the framework of edge computing, 
tandem queuing model is widely adopted to describe the interplay between communication and computation \cite{kuang2020analysis,zou2021optimizing,chiariotti2021peak,qin2022timeliness}. With Poisson sampling process, the average AoI is derived as a function of the sampling rate, transmission rate, and computation rate. Soft update is proposed in \cite{bastopcu2019minimizing} to characterize of process of computation. Optimal sampling policies are derived for exponentially and linearly decaying age cases. In \cite{li2021age}, constrained Markov decision process is adopted to decide when to offload computation-intensive status update to the edge server. In \cite{song2019age}, a finite horizon problem is formulated to optimization linear AoI target. Multi-device scheduling problem is studied in \cite{tripathi2021computation} which only considers local computing.

\subsection{Contributions}

For multi-device scheduling in the context of information freshness, an important problem is how to orchestrate communication and computation resources. Most previous papers on this problem only consider single-device case. The one most closely related to our work is \cite{tripathi2021computation}, but we extend the choices of computing to include the edge side. Our work aims to address the problem of scheduling energy-constrained devices with nonlinear AoI penalty functions and explore ways to provide up-to-date status information by switching between edge computing and local computing. Our contributions can be summarized as follows,
\begin{itemize}
	\item We develop a general framework to jointly consider the communication and computation aspects of real-time status update applications. Computation tasks can be done on-device or on the edge server side. Taking transmission and computation time into account, control performance is modeled as general monotonic AoI penalty functions.
	Given system parameter, a nontrivial lower bound of the time average AoI penalty is derived. By inspecting the property of the lower bound, we propose indices that represent the priority of local computing or edge computing at different AoI values.
	\item A low-complexity scheduling policy is proposed by combining the indices introduced above with the virtual queue technique from Lyapunov optimization \cite{neely2010stochastic}. We show that this policy satisfies the energy constraints of each device. For penalty functions of the form $f(x)=x^p$, $p>0$, we derive the performance gap between the proposed policy and the lower bound, when communication and computation stages take single time slot.
	\item Extensive simulations are carried out to evaluate the performance of the proposed policy for different forms of penalty functions and latency distributions. Simulation results demonstrate that the average AoI penalty under the proposed policy is close to the lower bound. Moreover, we apply the proposed policy to object tracking applications which can be naturally cast as status update processes. The proposed policy is examined on a large video dataset ILSVRC17-VID \cite{ILSVRC15}. Our results show that the proposed policy improves object tracking accuracy by 27\% to that of the video content matching-based scheduling. Furthermore, the proposed policy also outperforms content-based scheduling that has access to the ground truth information. 
\end{itemize}

The rest of this paper is organized as follows. In Section \ref{sec-1}, we present the system model and the problem formulation. In Section \ref{sec-2}, we formulate a convex optimization problem to compute a nontrivial lower bound of the average AoI penalty. In Section \ref{sec-3}, a low-complexity scheduling policy is provided based on the lower bound problem. In Section \ref{sec-4}, numerical results are presented along with the object tracking application. We conclude the paper in Section \ref{sec-5}.

\section{System Model}
\label{sec-1}

\begin{figure}[t]
    \centering
    \includegraphics[width=0.4\textwidth]{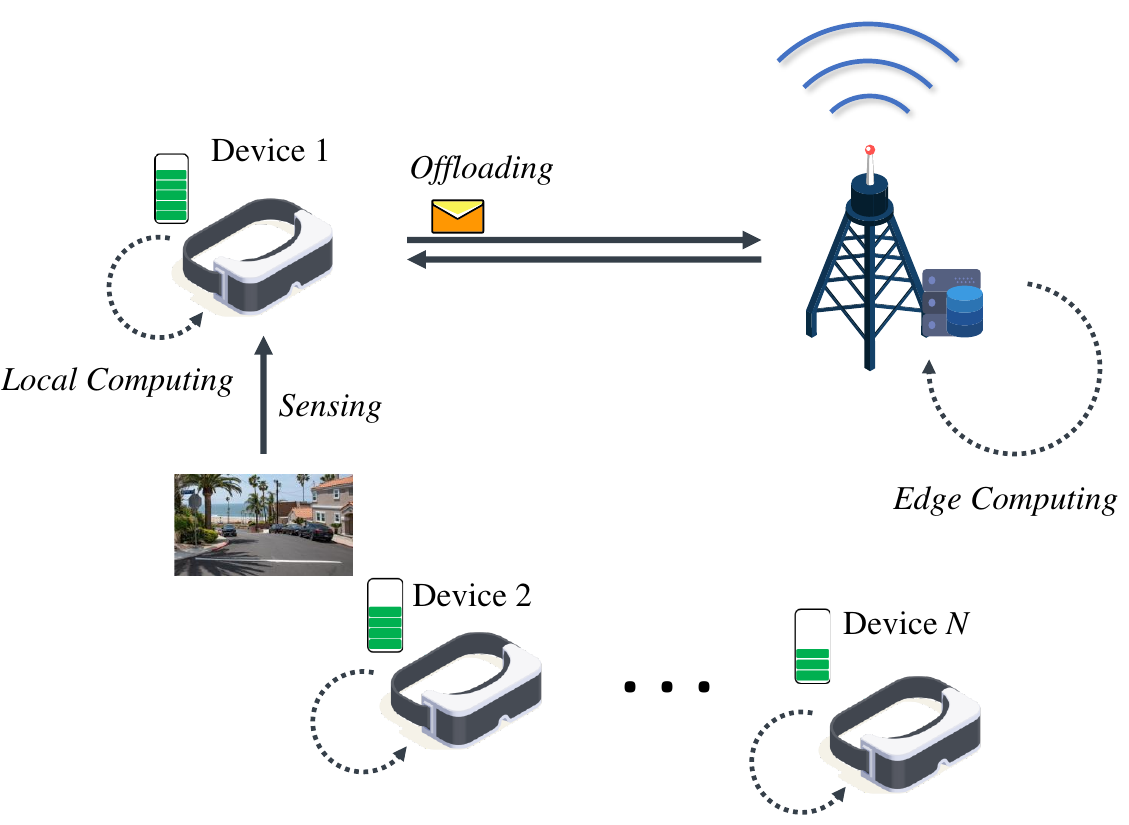}
    \caption{System model.}
    \label{fig:system}
    \vspace{-2em}
\end{figure}

We consider the status update system shown in Fig. \ref{fig:system}. This system consists of a set of energy-constrained devices, denoted as $\mathcal{N}$, with a total number of $N$.  Each device performs a sensing-control task by collecting sensing data, extracting status information from it, and determining control actions based on the status. As the status information becomes stale, the control quality decreases. This simplified model is well-suited for many real-time applications and abstracts away irrelevant details. For energy-constrained mobile devices, however, frequent status updates can quickly drain the battery. Therefore, a scheduling policy is required for each device to decide 1) when to generate a computation task for status update and 2) where to execute the computation.

Slotted time system is considered. At the beginning of a time slot, each device collects sensing data if scheduled, which is assumed to take negligible time. The scheduled devices then perform computation locally or offload computation tasks to an edge server, which takes several slots to finish. For device $n$, $n\in\mathcal{N}$, let $D_{l,n}$ be the number of slots required to finish local computing. The offloading stage takes $D_{t,n}$ slots to send the raw sensing data to the edge server, followed by edge computing that lasts $D_{e,n}$ slots. Result feedback delay is ignored. These three are random variables with finite expectations denoted as $\overline{D}_{l,n}$, $\overline{D}_{t,n}$, and $\overline{D}_{e,n}$, respectively. Furthermore, the latency in each communication or computation stage is assumed to be independent.

We use three binary indicators $u_{l,n}(k)$, $u_{t,n}(k)$, and $u_{e,n}(k)$ to indicate the stage device $n$ is in at time slot $k$. For example, $u_{l,n}(k)=1$ if device $n$ is performing local computing at time slot $k$. Otherwise, $u_{l,n}(k)=0$. Similarly, $u_{t,n}(k)$ and $u_{e,n}(k)$ are associated with the offloading and edge computing stages, respectively. Consider non-preemptive policies, we require that $u_{l,n}(k)+u_{t,n}(k)+u_{e,n}(k)\le 1$. When the summation is zero, device $n$ is idle.

Let $M$ be the number of orthogonal sub-channels. If device $n$ is scheduled to offload sensing data to the edge server, it will occupy one idle channel for $D_{t,n}$ consecutive slots to complete the transmission. On the edge server side, we assume that it is equipped with multi-core hardware and can process multiple computation tasks in parallel \cite{mao2016dynamic}. Therefore, each offloaded computation task is served immediately upon arrival, and there is no queuing delay. 

Let $d_n(k)$ be the latency since the time slot when the sensing data is collectd. 
If $u_{l,n}(k)+u_{t,n}(k)+u_{e,n}(k)=1$, which means that device $n$ is performing status update, $d_n(k)=d_n(k-1)+1$. Otherwise, $d_n(k)=0$. 

When computation is finished, a new control action is generated and returned to the device. Generally, the quality of the control action depends on the freshness of the sensing data used to compute it. To capture this freshness, AoI is defined as the time elapsed since the generation time of the sensing data used to compute the current control action. The AoI of device $n$ at time slot $k$ is denoted as $h_n(k)$. As shown in Fig. \ref{fig:AoI}, AoI evolves as:
 \begin{equation}
    h_n(k) = \left\{
        \begin{array}{c l}
            d_n(k-1) + 1, ~&\text{if the computation is } \\
            &\text{finished at slot $k-1$,}\\
            h_n(k-1) + 1, ~&\text{otherwise.}
        \end{array}
    \right.
 \end{equation}

It is pointed out in \cite{champati2019performance} that, for LTI system, the control quality can be cast as a function of AoI if the sampling process is independent of the content of the underlying process. Following this finding, we model the relationship between control quality and AoI as a penalty function $f_n(\cdot)$, representing the degradation in performance due to information staleness. It is required that the penalty increases with AoI. Furthermore, to avoid ill cases, we also require that the expected penalty with latency $D_{l,n}, D_{t,n}, D_{e,n}$ is finite.

\begin{figure}[t]
    \centering
    \includegraphics[width=0.4\textwidth]{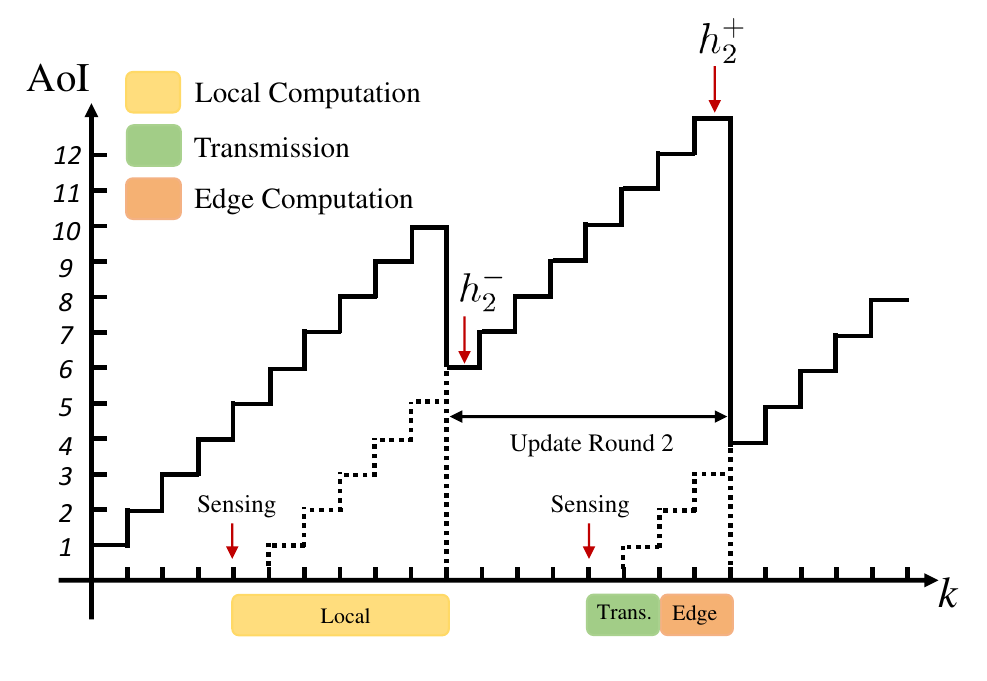}
    \caption{Evolution of AoI.}
    \label{fig:AoI}
    \vspace{-3em}
\end{figure}

We focus on energy consumption on the device side. For local computing, device $n$ takes $E_{l,n}$ Joule per slot. When offloading sensing data to the edge server, device $n$ consumes $E_{t,n}$ Joule per slot. Let $E_n(k)$ be the energy consumption at time slot $k$, it consists of two components $E_n(k) = E_{l,n}u_{l,n}(k) + E_{t,n}u_{t,n}(k)$.
The average energy consumed per time slot by device $n$ should be no larger than $\overline{E}_n$.

Let vector $\bm{h}(k)\triangleq (h_1(k),h_2(k),\dots,h_N(k))$ represent the AoI of all devices at time slot $k$. Similarly, $\bm{d}(k),\bm{u}_{l}(k),\bm{u}_{t}(k),\bm{u}_{e}(k)$ are vectors of corresponding variables. The state of the whole status update system is $\Theta(k)\triangleq (\bm{h}(k),\bm{d}(k),\bm{u}_{l}(k),\bm{u}_{t}(k),\bm{u}_{e}(k))$. The history up to time slot $k$ is denoted as $\mathcal{H}(k)\triangleq \{\Theta(i)|i\le k\}$. A scheduling policy $\pi$ takes in the history $\mathcal{H}(k)$ and decides the new value of $\bm{u}_{l}(k)$ and $\bm{u}_{t}(k)$. 
Note that policy $\pi$ is a centralized policy because it needs $\bm{h}(k)$ and $\bm{d}(k)$ to make decision. To obtain this information, we assume each device will report at the start and end of its computation. Because each device is not always doing computation and this action information is tiny compared to raw sensing data, we ignore this extra cost to implement policy $\pi$. 
Our objective is to propose a scheduling policy that minimizes the time-averaged AoI penalty, subject to energy consumption constraints and communication constraints, as expressed in \textbf{P1}.
 \begin{equation}
 \label{target}
    \begin{aligned}
        \textbf{P1}:\quad \min_{\pi\in \Pi}\quad&
            \sum_{n\in\mathcal{N}}
            \limsup_{K\to\infty}\frac{1}{K}\mathbb{E}_{\pi}
            \left[\sum_{k=1}^{K}f_n(h_n(k))\right]\\
        \textrm{s.t.}\quad 
            & u_{t,n}(k),u_{e,n}(k),u_{l,n}(k) \in \{0,1\}, ~\forall k \ge 1,~\forall n\in\mathcal{N},\\
            & u_{t,n}(k)+u_{e,n}(k)+u_{l,n}(k) \le 1, ~\forall k \ge 1,~\forall n\in\mathcal{N},\\ 
            & \sum_{n\in\mathcal{N}}u_{t,n}(k)\le M, ~\forall k \ge 1,\\
            & \limsup_{K\to\infty} \frac{1}{K}\mathbb{E}_\pi 
                \left[
                    \sum_{k=1}^{K} E_n(k) 
                \right] \le \overline{E}_n, \forall n\in\mathcal{N}.
    \end{aligned}
 \end{equation}
Here, $\Pi$ is the set of non-preemptive policies. This problem can be formulated as a Constrained Markov Decision Process (CMDP) with $\Theta(k)$ representing the state of the system. However, solving this problem exactly is computationally prohibitive. The first reason is that the state space grows exponentially with the number of devices. The second reason is that there are multiple constraints, which renders the standard iteratively tightening approach for CMDP invalid \cite{altman1999constrained}.  Therefore, in the following, we begin by investigating the lower bound of the AoI penalty. Building on this, we propose a low-complexity scheduling policy that draws inspiration from the lower bound problem.

\section{Lower Bound of The AoI Penalty}
\label{sec-2}

In this section, we aim to derive a nontrivial lower bound on the AoI penalty given system parameters. This not only aids in evaluating policy performance but also provides valuable insights into how to design a scheduling policy.

\subsection{Lower Bound Derivation}
We first study the AoI penalty of a single device and then extend the result to multiple devices. For simplicity, the subscript $n$ is dropped temporarily. The time horizon can be divided into disjoint time intervals delineated by the event of computation completion, with each interval being referred to as an \emph{update round}, as shown in Fig. \ref{fig:AoI}. 

Let $h_l^{+}$ and $h_t^{+}$ be the peak age in local computing round and edge computing round respectively. Both are random variables depending on the policy $\pi$. Furthermore, we introduce $\rho_{l}(\pi)$ and $\rho_{t}(\pi)$ to denote the portions of energy spent on local computing and offloading under policy $\pi$ respectively. The following lemma presents an alternative expression for the average AoI penalty in \textbf{P1},
\begin{lemma}\label{thm_0}
    Given policy $\pi\in \Pi$, the average AoI penalty is,
    \begin{equation}
        \begin{aligned}
            &\limsup_{K\to\infty}\frac{1}{K}\mathbb{E}_{\pi}
                \left[\sum_{k=1}^{K}f(h(k))\right]\\
                &= \frac{\rho_{t}(\pi)\overline{E}}{E_{t}\overline{D}_{t}}\mathbb{E}_{\pi}[F(h^{+}_{t})-F(D_{t}+D_{e}-1)] \\
                &+\frac{\rho_{l}(\pi)\overline{E}}{E_{l}\overline{D}_{l}}\mathbb{E}_{\pi}[F(h^{+}_{l})-F(D_{l}-1)],
        \end{aligned}
    \end{equation}
    where
    \begin{align}
        F(h) \triangleq \sum_{x=0}^{h} f(x).
    \end{align} 
\end{lemma}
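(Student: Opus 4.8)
The plan is to establish a renewal--reward decomposition of the cumulative penalty along the \emph{update rounds} introduced before the statement, and then to translate the long-run occurrence rate of each round type into the energy fractions $\rho_{l}(\pi)$ and $\rho_{t}(\pi)$.

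First I would partition the horizon $\{1,\dots,K\}$ into segments, each beginning at the slot immediately following a computation completion and ending at the next completion. Every slot belongs to exactly one such segment, namely the one initiated by the most recent completion, so this is an exact partition apart from one incomplete segment at each end of the horizon. I would label a segment \emph{local} or \emph{edge} according to whether the completion that initiated it came from local computing or from edge computing. Within a single segment the AoI increases by one in every slot, rising from its minimum value to the peak, and the minimum AoI equals the latency of the completed update, i.e. $D_{l}$ for a local segment and $D_{t}+D_{e}$ for an edge segment. Hence, using $\sum_{x=a}^{b} f(x)=F(b)-F(a-1)$, the penalty accumulated over a local segment is exactly $F(h_{l}^{+})-F(D_{l}-1)$ and that over an edge segment is $F(h_{t}^{+})-F(D_{t}+D_{e}-1)$.

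Next I would count segments and relate the counts to energy. The number of local segments completed up to slot $K$ equals the number of local computations, and similarly for edge segments. The energy spent on local computing over the horizon is $E_{l}\sum_{k=1}^{K}u_{l}(k)$, which equals $E_{l}$ times the total number of local-computing slots, i.e. $E_{l}$ times the number of local updates times their average duration (converging to $\overline{D}_{l}$). Dividing by $K$, letting $K\to\infty$, and recalling that $\rho_{l}(\pi)\overline{E}$ is by definition the long-run energy rate spent on local computing, I obtain that the per-slot rate of local segments converges to $\rho_{l}(\pi)\overline{E}/(E_{l}\overline{D}_{l})$; the identical argument applied to the transmission energy $E_{t}$ gives the edge-segment rate $\rho_{t}(\pi)\overline{E}/(E_{t}\overline{D}_{t})$.

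Finally I would assemble the pieces: writing $\sum_{k=1}^{K}f(h(k))$ as the sum of per-segment penalties plus negligible boundary terms, I would apply a renewal--reward (law-of-large-numbers) argument so that the time average equals the local-segment rate times the expected local-segment penalty plus the edge-segment rate times the expected edge-segment penalty, which is precisely the claimed identity. The step I expect to be the main obstacle is this limiting argument: because $\pi$ is an arbitrary non-preemptive policy, not assumed stationary or Markov, I cannot invoke a textbook regenerative renewal--reward theorem directly, and I must justify interchanging the $\limsup$, the expectation $\mathbb{E}_{\pi}$, and the segment decomposition while controlling the two incomplete boundary segments. The finiteness of the expected penalty per round, guaranteed by the standing assumption that the expected penalty under the latencies $D_{l},D_{t},D_{e}$ is finite, is what I would use to show that these boundary contributions vanish after division by $K$.
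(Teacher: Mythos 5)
Your proposal follows essentially the same route as the paper's proof: partition the horizon into rounds delineated by computation completions, express the per-round penalty as a difference of $F$-values, convert the energy budget into the long-run completion rates $\rho_{l}(\pi)\overline{E}/(E_{l}\overline{D}_{l})$ and $\rho_{t}(\pi)\overline{E}/(E_{t}\overline{D}_{t})$, and finish with a renewal--reward limit while controlling boundary terms. Two remarks are in order. First, your bookkeeping pairs each update's latency with the peak age reached in the segment that \emph{follows} that completion, so your ``$h_{l}^{+}$'' is the peak age after a local completion; the paper instead classifies a round by the computation executed \emph{within} it and shifts the sum $\sum_{c}F(h_{c}^{-}-1)$ by one index, so its $h_{l}^{+}$ is the peak of the round that \emph{ends} with a local completion. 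Both regroupings telescope to the same total penalty, both give a true identity of the stated form, and either suffices downstream (in \textbf{P2}--\textbf{P3} the expectations $\mathbb{E}[h_{l}^{+}]$, $\mathbb{E}[h_{t}^{+}]$ are relaxed to free variables anyway), but under a general policy they are different random variables, so strictly speaking you prove the lemma with $h_{l}^{+}$ and $h_{t}^{+}$ reinterpreted relative to the paper's definition of a ``local/edge computing round.'' Second, the limiting step you flag as the main obstacle is resolved in the paper by (i) invoking renewal theory for the almost-sure convergence of $C_{l}(K)/K$ and $C_{t}(K)/K$, (ii) restricting attention to policies whose residual term satisfies $R(K)/K\to 0$ almost surely, on the grounds that any other policy cannot be optimal, and (iii) applying the dominated convergence theorem to identify the expected time average with the almost-sure time average; your proposal correctly identifies this as the delicate point but leaves it open.
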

\begin{proof}
    See Appendix \ref{app-0}.
\end{proof}

Considering the following optimization problem \textbf{P2},
\begin{equation}
\begin{aligned}
        \textbf{P2}: \quad\min_{\pi}\quad  
            &\frac{\rho_{t}\overline{E}}{E_{t}\overline{D}_{t}}\mathbb{E}_{\pi}(\pi)[F(h^{+}_{t})-F(D_{t}+D_{e}-1)]\\
            &+\frac{\rho_{l}(\pi)\overline{E}}{E_{l}\overline{D}_{l}}\mathbb{E}_{\pi}[F(h^{+}_{l})-F(D_{l}-1)] \\
        \text{s.t.}\quad 
            &\frac{\rho_{t}(\pi)\overline{E}}{E_{t}\overline{D}_{t}}(\mathbb{E}_{\pi}[h^{+}_{t}]-(\overline{D}_{t}+\overline{D}_{e}-1)) \\
            &+\frac{\rho_{l}(\pi)\overline{E}}{E_{l}\overline{D}_{l}}(\mathbb{E}_{\pi}[h^{+}_{l}]-(\overline{D}_{l}-1)) = 1.
\end{aligned}
\end{equation}
Lemma \ref{thm_0_1} shows that it provides a lower bound for \textbf{P1},
\begin{lemma}\label{thm_0_1}
    The minimum value of \textbf{P2} is no larger than that of \textbf{P1}.
\end{lemma}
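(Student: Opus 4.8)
The plan is to show that every policy feasible for \textbf{P1} induces a point that is feasible for \textbf{P2} and attains exactly the same objective value. Since \textbf{P2} is obtained from \textbf{P1} by discarding constraints while keeping the objective, minimizing the identical objective over the enlarged feasible set can only lower the optimum, yielding $\mathrm{opt}(\textbf{P2})\le\mathrm{opt}(\textbf{P1})$. Concretely, I would fix an arbitrary $\pi\in\Pi$ that is feasible for \textbf{P1}, read off the induced quantities $\rho_t(\pi),\rho_l(\pi)$ together with the peak-age distributions of $h^+_t$ and $h^+_l$, and treat this tuple as the candidate point for \textbf{P2}. By Lemma \ref{thm_0}, the \textbf{P1} objective evaluated at $\pi$ is already written as the objective of \textbf{P2}, so the two costs coincide pointwise and the only thing left to check is feasibility.

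The key step is to verify that $\pi$ satisfies the single equality constraint of \textbf{P2}. The useful observation is that this constraint is merely the normalization stating that the long-run time average of the constant reward $1$ equals $1$. I would establish it by applying the renewal identity of Lemma \ref{thm_0} to the constant penalty $f\equiv 1$, for which $F(h)=h+1$. Its left-hand side then becomes $\limsup_{K\to\infty}\frac1K\mathbb{E}_\pi[\sum_{k=1}^{K}1]=1$, while on its right-hand side one has $F(h^+_t)-F(D_t+D_e-1)=h^+_t-(D_t+D_e-1)$ and the analogous local identity $F(h^+_l)-F(D_l-1)=h^+_l-(D_l-1)$; taking expectations collapses the right-hand side to precisely the left-hand side of the \textbf{P2} constraint. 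Hence the constraint holds with equality for every $\pi$. I would also remark that Lemma \ref{thm_0} is a renewal-reward bookkeeping identity that does not rely on strict monotonicity of the penalty, so it remains valid for the constant function, whose expectation is trivially finite.

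Having matched the objectives and verified feasibility, I would conclude that any \textbf{P1}-feasible $\pi$ is feasible for \textbf{P2} at the same cost. The constraints of \textbf{P1} that \textbf{P2} omits — the per-slot channel budget $\sum_{n}u_{t,n}(k)\le M$, the integrality of the indicators, the inter-device coupling, and, crucially, the restriction that the peak-age statistics be realizable by an actual non-preemptive policy — are all relaxed, so the decoupled per-device minimization in \textbf{P2} is carried out over a superset of what \textbf{P1} allows. Summing the resulting per-device bounds over $n\in\mathcal{N}$ then gives $\mathrm{opt}(\textbf{P2})\le\mathrm{opt}(\textbf{P1})$.

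I expect the feasibility step to be the main obstacle, specifically recognizing that the somewhat opaque equality constraint of \textbf{P2} is exactly the $f\equiv 1$ specialization of Lemma \ref{thm_0}, and confirming that the underlying renewal decomposition together with the associated $\limsup$/expectation interchange remains legitimate for a non-strictly-increasing penalty. A secondary point requiring care, in the multi-device reading, is to argue that dropping the channel budget and the realizability requirement genuinely \emph{relaxes} the problem rather than changing it, so that passing from a policy-attainable point to the full feasible region of \textbf{P2} is valid and direction-preserving.
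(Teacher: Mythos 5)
Your proposal is correct and follows essentially the same route as the paper: Lemma \ref{thm_0} matches the objectives, and the only substantive step is showing that every \textbf{P1}-feasible policy satisfies the equality constraint of \textbf{P2}, so that \textbf{P2} minimizes the same cost over a superset. Your derivation of that constraint by specializing Lemma \ref{thm_0} to $f\equiv 1$ (so $F(h)=h+1$) is a clean repackaging of the paper's explicit computation, which telescopes $\sum_{c}(h_c^{+}-h_c^{-}+1)\approx K$ and applies renewal-reward theory to the constant per-round reward; both yield the identical identity, and your remark that the bookkeeping in Lemma \ref{thm_0} never uses monotonicity of $f$ is exactly what makes the shortcut legitimate.
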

\begin{proof}
    See Appendix \ref{app-0}.
\end{proof}

Because $f(x)$ only takes values at discrete point, we introduce extended penalty function $\tilde{f}(x)$ to facilitate analysis. $\tilde{f}(x)$ is obtained by interpolating $f(x)$ such that: 1) $\tilde{f}(x)$ is an increasing function, 2) $\tilde{f}(x)=f(x)$, when $x\in\mathbb{N}$. As a result, we have
\begin{align}
    \sum_{i=0}^{h} f(i) \ge \int_{0}^{h}\tilde{f}(x)dx.
\end{align}
Let $\tilde{F}(h)$ be the integral of $\tilde{f}(x)$ over $[0,h]$. Then, $\tilde{F}(h)\le F(h)$. Because $\tilde{f}(x)$ is increasing, $\tilde{F}(h)$ is convex. By Jensen's inequality, we have
\begin{align}\label{jensen}
    \mathbb{E}_{\pi}[F(h_l^{+})] \ge \tilde{F}(\mathbb{E}_{\pi}[h_l^{+}]), ~ 
    \mathbb{E}_{\pi}[F(h_t^{+})] \ge \tilde{F}(\mathbb{E}_{\pi}[h_t^{+}]).
\end{align}
Let $x\triangleq \mathbb{E}_{\pi}[h_l^{+}]$ and $ y\triangleq \mathbb{E}_{\pi}[h_t^{+}]$. Plugging \eqref{jensen} into $\textbf{P2}$ leads to the following optimization problem \textbf{P3},
\begin{equation}
\begin{aligned}
        \textbf{P3}:\quad \min_{x\ge 0,y\ge0}\quad & 
            \frac{\rho_{t}(\pi)\overline{E}}{E_{t}\overline{D}_{t}}\left(\tilde{F}(y)-\mathbb{E}_{\pi}[F(D_{t}+D_{e}-1))]\right)\\
            &+\frac{\rho_{l}(\pi)\overline{E}}{E_{l}\overline{D}_{l}}\left(\tilde{F}(x)-\mathbb{E}_{\pi}[F(D_{l}-1)]\right) \\
        \text{s.t.}\quad &
            \frac{\rho_{t}(\pi)\overline{E}}{E_{t}\overline{D}_{t}}(y-(\overline{D}_{t}+\overline{D}_{e}-1))\\
            &+\frac{\rho_{l}(\pi)\overline{E}}{E_{l}\overline{D}_{l}}(x-(\overline{D}_{l}-1)) = 1.
\end{aligned}
\end{equation}
\textbf{P3} relaxes the feasible region of $\mathbb{E}_{\pi}[h_l^{+}]$ and $\mathbb{E}_{\pi}[h_l^{+}]$ to non-negative number, and thus it is a relaxation of \textbf{P2}. The optimal solution is 
\begin{align}\label{upper-opt-single}
    x_{\textrm{opt}} = y_{\textrm{opt}} = 
    \frac{1+ \frac{ \rho_{l}(\pi)\overline{E}(\overline{D}_{l}-1) }{ E_{l}\overline{D}_{l} } + \frac{ \rho_{t}(\pi)\overline{E}(\overline{D}_{t}+\overline{D}_{e}-1) }{ E_{t}\overline{D}_{t} }}{\frac{\rho_{l}(\pi)\overline{E}}{E_{l}\overline{D}_{l}}+\frac{\rho_{t}(\pi)\overline{E}}{E_{t}\overline{D}_{t}}}.
\end{align}

For simplicity, the optimal solution \eqref{upper-opt-single} is denoted as $G(\rho_l,\rho_t)$. Then, with energy proportions $\rho_l,\rho_t$, the penalty lower bound in single-device case is
\begin{equation}
    \begin{aligned}
        &\frac{\rho_{t}\overline{E}}{E_{t}\overline{D}_{t}}\left(\tilde{F}(G(\rho_l,\rho_t))-\mathbb{E}_{\pi}[F(D_{t}+D_{e}-1))]\right) \\
        &+\frac{\rho_{l}\overline{E}}{E_{l}\overline{D}_{l}}\left(\tilde{F}(G(\rho_l,\rho_t))-\mathbb{E}_{\pi}[F(D_{l}-1)]\right).
    \end{aligned}
\end{equation}

Now we zoom out to consider the entire system, and let $\bm{\rho}_{l}\triangleq (\rho_{l,1},\rho_{l,2},\dots,\rho_{l,N})$, $\bm{\rho}_{t}\triangleq (\rho_{t,1},\rho_{t,2},\dots,\rho_{t,N})$. Consider optimization problem \textbf{P4}
\begin{equation}\label{upper-opt}
    \begin{aligned}
        \min_{\bm{\rho}_l,\bm{\rho}_t}\quad&
            \sum_{n\in\mathcal{N}}\left(
            \frac{\rho_{l,n}\overline{E}_n}{E_{l,n}\overline{D}_{l,n}} +
            \frac{\rho_{t,n}\overline{E}_n}{E_{t,n}\overline{D}_{t,n}}
            \right) \tilde{F}_n(G_n(\rho_{l,n},\rho_{t,n})) \\ 
            & -\sum_{n\in\mathcal{N}}\left(\frac{\rho_{t,n}\overline{E}_n}{E_{t,n}\overline{D}_{t,n}} \mathbb{E}[F_n(D_{t,n}+D_{e,n}-1)]\right.\\
            &+ \left.\frac{\rho_{l,n}\overline{E}_n}{E_{l,n}\overline{D}_{l,n}} \mathbb{E}[F_n(D_{l,n}-1)]
            \right)\\
        \textrm{s.t.}\quad 
            &  \sum_{n\in\mathcal{N}} \frac{\rho_{t,n}\overline{E}_n}{E_{t,n}} \le M, \\
            & \rho_{l,n} + \rho_{t,n}\le 1,~\forall n\in\mathcal{N},\\
            & \rho_{l,n}\ge 0, \rho_{t,n}\ge 0, ~\forall n\in\mathcal{N}.
    \end{aligned}
\end{equation}
The first constraint is obtained by relaxing the communication constraint, which originally states that at most $M$ devices can offload simultaneously. It is now relaxed as the time-average number of transmissions, which should not exceed $M$. Therefore, the optimal value of \textbf{P4} provides a lower bound of the time average AoI penalty.

\subsection{Lower Bound Analysis}
In this subsection, we first show that \textbf{P4} is a convex optimization problem, and then study properties of the optimal solution based on KKT conditions.
\begin{lemma}\label{thm_1}
    The optimization problem \textbf{P4} is convex.
\end{lemma}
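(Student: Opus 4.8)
The plan is to split the objective of \textbf{P4} into an affine part and a nonlinear part, verify that the feasible set is polyhedral, and then show the nonlinear part is a sum of \emph{perspective functions} of the convex maps $\tilde{F}_n$, which are convex by construction.

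First I would note that all three constraints are affine in the decision variables $(\bm{\rho}_l,\bm{\rho}_t)$: the relaxed communication constraint $\sum_{n}\rho_{t,n}\overline{E}_n/E_{t,n}\le M$, the budget constraints $\rho_{l,n}+\rho_{t,n}\le 1$, and the sign constraints are all linear, so the feasible region is a polytope and hence convex. Likewise, the second summation in the objective is affine, since the coefficients $\mathbb{E}[F_n(\cdot)]$ are constants independent of the decision variables and multiply the $\rho$-terms linearly; an affine function is convex, so it remains only to treat the first summation.

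The core of the argument is to recognize the first summation as a sum of perspective functions. For each device $n$ I would introduce the abbreviations $a_n\triangleq\overline{E}_n/(E_{l,n}\overline{D}_{l,n})$ and $b_n\triangleq\overline{E}_n/(E_{t,n}\overline{D}_{t,n})$, and define the two affine maps
\begin{align*}
    w_n &\triangleq a_n\rho_{l,n}+b_n\rho_{t,n}, \\
    v_n &\triangleq 1 + a_n\rho_{l,n}(\overline{D}_{l,n}-1) + b_n\rho_{t,n}(\overline{D}_{t,n}+\overline{D}_{e,n}-1).
\end{align*}
A direct comparison with \eqref{upper-opt-single} shows that $G_n(\rho_{l,n},\rho_{t,n})=v_n/w_n$, so the $n$-th summand equals exactly $w_n\,\tilde{F}_n(v_n/w_n)$, where the leading weight is precisely $w_n$. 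Since $\tilde{f}_n$ is increasing, $\tilde{F}_n$ is convex, and the perspective of a convex function, $(v,w)\mapsto w\,\tilde{F}_n(v/w)$, is jointly convex on $\{w>0\}$. Because $(\rho_{l,n},\rho_{t,n})\mapsto(v_n,w_n)$ is affine and composing a convex function with an affine map preserves convexity, each summand is convex in $(\rho_{l,n},\rho_{t,n})$; the first summation, being a sum of functions each convex in the full variable vector, is therefore convex.

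Collecting these facts, the objective of \textbf{P4} is the sum of a convex function and an affine function minimized over a convex polyhedral set, which establishes convexity. I expect the main obstacle to be the perspective-function step: verifying the algebraic identity $G_n=v_n/w_n$ and confirming that both $v_n$ and $w_n$ are affine, so that the standard joint convexity of the perspective function applies after composition with this affine lift. A minor technical point to handle is the degenerate boundary $w_n=0$ (i.e.\ $\rho_{l,n}=\rho_{t,n}=0$), where $G_n$ is undefined; since zero energy allocation forces an unbounded penalty, this point lies outside the effective domain, and convexity on the relevant region $\{w_n>0\}$, together with its convex closure, suffices to conclude that \textbf{P4} is convex.
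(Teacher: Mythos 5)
Your proposal is correct and follows essentially the same route as the paper's proof: both recognize the leading coefficient $a_n\rho_{l,n}+b_n\rho_{t,n}$ and the numerator $1+a_n c_n\rho_{l,n}+b_n d_n\rho_{t,n}$ as affine maps of the decision variables, identify the summand as the perspective of the convex function $\tilde{F}_n$ composed with that affine lift, and invoke preservation of convexity under affine composition. The only additions beyond the paper's argument are your explicit remarks on the polyhedral feasible set, the affine second summation, and the boundary $w_n=0$, all of which are routine.
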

\begin{proof}
    See Appendix \ref{app-1}.
\end{proof}

For simplicity, let's introduce the following auxiliary variables
\begin{equation}\label{notation-1}
    \begin{aligned}
    &a_n = \frac{\overline{E}_n}{E_{l,n}\overline{D}_{l,n}}, ~
     b_n = \frac{\overline{E}_n}{E_{t,n}\overline{D}_{t,n}}, \\
    &c_n = \overline{D}_{l,n}-1, ~
     d_n = \overline{D}_{t,n}+\overline{D}_{e,n}-1, \\
    &v_n = \mathbb{E}_{\pi}[F_n(D_{l,n}-1)], ~
     w_n = \mathbb{E}_{\pi}[F_n(D_{t,n}+D_{e,n}-1)],\\
    &x_n = \rho_{l,n},~
     y_n = \rho_{t,n}.
\end{aligned}
\end{equation}
Let $\alpha, \bm{\beta},\bm{\gamma},\bm{\nu}$ be Lagrange multipliers. the Lagrangian function is 
\begin{equation}\label{lagrange}
    \begin{aligned}
    L(\bm{x},\bm{y},\alpha,\bm{\beta},\bm{\gamma},\bm{\nu})
    =  & \bm{\beta}^{T}(\bm{x}+\bm{y}-\bm{1})
            - \bm{\gamma}^{T}\bm{x} - \bm{\nu}^{T}\bm{y} \\
            &+
            \sum_{n\in\mathcal{N}}
            (a_n x_n + b_n y_n) \tilde{F}_n(G_n(x_n,y_n))\\
            &- 
            \sum_{n\in\mathcal{N}}(a_n v_n x_n + b_n w_n y_n) \\
            & + \alpha \left( \sum_{n\in\mathcal{N}} \frac{y_n\overline{E}_n}{E_{t,n}}-M \right).
    \end{aligned}
\end{equation}

Because \textbf{P4} is convex, the optimal solution and Lagrange multipliers satisfy KKT conditions. Let $\bm{x}^*, \bm{y}^*, \alpha^*, \bm{\beta}^*,\bm{\gamma}^*,\bm{\nu}^*$ be the corresponding optimal solution. Applying KKT conditions to \eqref{lagrange} provides the following property,
\begin{theorem}\label{thm-lagrange}
    The optimal solution specified by KKT conditions satisfies
    \begin{equation}\label{KKT-final-2}
    \frac{W_{t,n}(h_{t,n}) }{E_{t,n}\overline{D}_{t,n}} -  \frac{W_{l,n}(h_{l,n})}{E_{l,n}\overline{D}_{l,n}}  - \frac{\gamma_n^* - \nu_n^*}{\overline{E}_n} = \frac{\alpha^*}{E_{t,n}},
    \end{equation}
    where
    \begin{align}
    &h_{t,n} = G_n(x_n^*,y_n^*) - (\overline{D}_{t,n}+\overline{D}_{e,n}-1),\\
    &h_{l,n} = G_n(x_n^*,y_n^*) - (\overline{D}_{l,n}-1),
    \end{align}
    and 
    \begin{align}
        &W_{t,n}(x) \nonumber\\
        &= x\tilde{f}_n(x+\overline{D}_{t,n}+\overline{D}_{e,n}-1) \nonumber \\
        &- (\tilde{F}_n(x+\overline{D}_{t,n}+\overline{D}_{e,n}-1)-\mathbb{E}[F_n(D_{t,n}+D_{e,n}-1)]), \label{eq1-3}\\
        &W_{l,n}(x) \nonumber\\
        &= x\tilde{f}_n(x+\overline{D}_{l,n}-1) \nonumber\\
        &- (\tilde{F}_n(x+\overline{D}_{l,n}-1)-\mathbb{E}[F_n(D_{l,n}-1)]). \label{eq1-4}
    \end{align}
\end{theorem}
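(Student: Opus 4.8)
The plan is to derive \eqref{KKT-final-2} directly from the stationarity part of the KKT conditions for \textbf{P4}, differentiating the Lagrangian \eqref{lagrange} with respect to each $x_n$ and each $y_n$, and then combining the two resulting equations so as to eliminate the multiplier $\bm{\beta}^*$. Since Lemma \ref{thm_1} guarantees that \textbf{P4} is convex, stationarity together with primal/dual feasibility and complementary slackness is both necessary and sufficient for optimality, so it suffices to set $\partial L/\partial x_n = 0$ and $\partial L/\partial y_n = 0$ at the optimum and manipulate these into the claimed form.

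First I would compute the partial derivatives of $G_n(x_n,y_n)$. Writing $G_n = N_n/S_n$ with $S_n = a_n x_n + b_n y_n$ and $N_n = 1 + a_n c_n x_n + b_n d_n y_n$ in the shorthand \eqref{notation-1}, the quotient rule gives $\partial_{x_n} G_n = a_n(c_n S_n - N_n)/S_n^2$ and $\partial_{y_n} G_n = b_n(d_n S_n - N_n)/S_n^2$. The decisive simplification arises when differentiating the objective term $(a_n x_n + b_n y_n)\tilde{F}_n(G_n) = S_n \tilde{F}_n(G_n)$: the product and chain rules give $a_n \tilde{F}_n(G_n) + S_n \tilde{f}_n(G_n)\,\partial_{x_n} G_n$, and here the factor $S_n$ from the product cancels against the $S_n^2$ in the denominator of $\partial_{x_n} G_n$, leaving the clean expression $a_n\bigl[\tilde{F}_n(G_n) + \tilde{f}_n(G_n)(c_n - G_n)\bigr]$ (and symmetrically $b_n\bigl[\tilde{F}_n(G_n) + \tilde{f}_n(G_n)(d_n - G_n)\bigr]$ for $y_n$). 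I expect this cancellation to be the main thing to get right: it is exactly what renders the final condition tractable, and it hinges on the specific rational form of $G_n$ inherited from the closed-form minimizer \eqref{upper-opt-single}.

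With this in hand, writing $G_n$ for $G_n(x_n^*,y_n^*)$, the two stationarity conditions become
\begin{align}
    \beta_n^* - \gamma_n^* &= a_n\bigl[(G_n - c_n)\tilde{f}_n(G_n) - \tilde{F}_n(G_n) + v_n\bigr],\\
    \beta_n^* - \nu_n^* + \alpha^*\frac{\overline{E}_n}{E_{t,n}} &= b_n\bigl[(G_n - d_n)\tilde{f}_n(G_n) - \tilde{F}_n(G_n) + w_n\bigr].
\end{align}
At this point I would recognize the bracketed quantities: since $h_{l,n} = G_n - c_n$ and $h_{t,n} = G_n - d_n$, so that $h_{l,n} + c_n = h_{t,n} + d_n = G_n$, the two brackets are precisely $W_{l,n}(h_{l,n})$ and $W_{t,n}(h_{t,n})$ as defined in \eqref{eq1-4} and \eqref{eq1-3}, recalling $c_n = \overline{D}_{l,n}-1$ and $d_n = \overline{D}_{t,n}+\overline{D}_{e,n}-1$.

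Finally, subtracting the first condition from the second eliminates $\beta_n^*$ and yields $b_n W_{t,n}(h_{t,n}) - a_n W_{l,n}(h_{l,n}) = \gamma_n^* - \nu_n^* + \alpha^* \overline{E}_n/E_{t,n}$. Substituting $a_n = \overline{E}_n/(E_{l,n}\overline{D}_{l,n})$ and $b_n = \overline{E}_n/(E_{t,n}\overline{D}_{t,n})$ and dividing through by $\overline{E}_n$ gives \eqref{KKT-final-2}. The purpose of subtracting the two stationarity conditions is precisely to remove the multiplier $\beta_n^*$ of the coupling constraint $\rho_{l,n}+\rho_{t,n}\le 1$, which is why $\bm{\beta}^*$ does not appear in the statement; the remaining multipliers $\gamma_n^*,\nu_n^*$ (the nonnegativity constraints) and $\alpha^*$ (the relaxed communication constraint) survive exactly as written.
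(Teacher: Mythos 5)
Your proposal is correct and follows essentially the same route as the paper's proof: write the stationarity conditions $\partial L/\partial x_n=0$ and $\partial L/\partial y_n=0$, subtract to eliminate $\beta_n^*$, and identify the bracketed expressions with $W_{l,n}(h_{l,n})$ and $W_{t,n}(h_{t,n})$. The only cosmetic difference is that you simplify $(a_nx_n+b_ny_n)\,\partial_{x_n}G_n$ to $a_n(c_n-G_n)$ immediately, whereas the paper carries the rational expression through and applies the equivalent identity \eqref{eq1-2} at the end.
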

\begin{proof}
    See Appendix \ref{app-1-2}.
\end{proof}
Here, $h_{t,n}$ represents the expected AoI when device $n$ is scheduled to offload, and $h_{l,n}$ represents the expected AoI when device $n$ is scheduled to do local computing. An intuitive illustration of $W_{t,n}$ and $W_{l,n}$ is shown in Fig. \ref{fig:Meaning}. Taking $W_{l,n}$ as an example, $h$ is the AoI when the scheduling decisions are made, and $d$ is the computing latency. When $x=h$, the first term in \eqref{eq1-4} is the summation of regions I, II, and III, and the second term is the summation of regions I and II. Thus, $W_{l,n}$ is the colored region III. With this geometrical interpretation, the influence of latency and penalty function is reduced to the area of the colored region in Fig. \ref{fig:Meaning}.
\begin{figure}[htbp]
    \centering
    \includegraphics[width=0.35\textwidth]{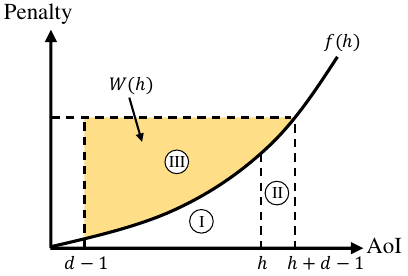}
    \caption{An illustration of $W(x)$.}
    \label{fig:Meaning}
\end{figure}

\begin{remark}
    Rethinking \eqref{KKT-final-2}, the first term $\frac{W_{t,n}(h)}{E_{t,n}\overline{D}_{t,n}}$ is the priority of doing status update by offloading when AoI is $h$. And the second term $\frac{W_{l,n}(h)}{E_{l,n}\overline{D}_{l,n}}$ corresponds to the priority of doing local computing. The third term is related to two Lagrange multipliers: $\gamma_n^*$ and $\nu_n^*$. Due to complementary slackness, $\gamma_n^*=0$ if $\rho^*_{l,n}>0$. For the same reason, $\nu_n^*=0$ if $\rho^*_{t,n}>0$. Note that $\gamma_n^*$ and $\nu_n^*$ can not be larger than $0$ simultaneously.
    When $\gamma_n^*$ and $\nu_n^*$ equal $0$, \eqref{KKT-final-2} is reduced to 
    \begin{align}\label{eq2-1}
        \frac{1}{\overline{D}_{t,n}} W_{t,n}(h_{t,n}) -  \frac{E_{t,n}}{E_{l,n}\overline{D}_{l,n}} W_{l,n}(h_{l,n}) = \alpha^*.
    \end{align}
    Taking $\alpha^*$ as the price of using the channel to offload, Equation \eqref{eq2-1} can be used to determine whether to perform local computing or offload updates.
\end{remark}


\section{Scheduling Policy}
\label{sec-3}

Based on \eqref{KKT-final-2}, which characterizes the expected AoI at scheduling instants, a natural and intuitive policy is to schedule local computing for device $n$ when its AoI is $h_{l,n}$ and to schedule offloading when the AoI is $h_{t,n}$. However, the challenge of obtaining the values of $h_{l,n}$ and $h_{t,n}$, as well as the parameters $\gamma_n$, $\nu_n$, and $\alpha$ at runtime, renders this policy impractical. Nevertheless, the insights provided by \eqref{KKT-final-2} indicate that a scheduling policy should steer the AoI at scheduling instants towards values that align with \eqref{KKT-final-2}. This helps to design scheduling policy for the original problem \textbf{P1}.

We first introduce an auxiliary variable $Q_n$, and rearrange \eqref{KKT-final-2} as 
\begin{equation}
 \label{eq4-1}
    \begin{aligned}
    &\left(\frac{W_{t,n}(h_{t,n})}{\overline{D}_{t,n}} - E_{t,n} Q_n\right) + \frac{E_{t,n}}{E_{l,n}}\left(\frac{W_{l,n}(h_{l,n})}{\overline{D}_{l,n}} - E_{l,n} Q_n\right) \\  
    &= \alpha + \frac{E_{t,n}}{\overline{E}_n}(\gamma_n-\nu_n).
    \end{aligned}
\end{equation}

If $Q_n$ satisfies
\begin{align} \label{eq4-1}
    \frac{W_{l,n}(h_{l,n})}{\overline{D}_{l,n}} - E_{l,n} Q_n = 0,
\end{align}
then
\begin{align}\label{eq4-2}
    \frac{W_{t,n}(h_{t,n})}{\overline{D}_{t,n}} - E_{t,n} Q_n  = \alpha + \frac{E_{t,n}}{\overline{E}_n}(\gamma_n-\nu_n) .
\end{align}
If the value of $Q_n$ is known, \eqref{eq4-1} and \eqref{eq4-2} provide a heuristic scheduling policy.
Firstly, for the edge computing part, since the function $W_{t,n}(x)$ is increasing, we can sort idle devices in descending order based on the left-hand side of \eqref{eq4-2} with $h_{t,n}$ replaced by $h_n(k)$ and select no more than $m(k)$ devices to offload, where $m(k)$ is the number of idle channels at time slot $k$. Then, if device $n$ is still idle and $h_n(k)$ satisfies that 
\begin{align}\label{eq4-1-1}
    \frac{W_{l,n}(h_{n}(k))}{\overline{D}_{l,n}} \ge E_{l,n} Q_n,
\end{align}
it will be scheduled to do local computing.
By adopting this approach, we can bring the AoI at scheduling instants closer to the values specified by \eqref{KKT-final-2}.

$Q_n$ plays the role of threshold in this policy, such that devices will not update so frequently that the energy constraints are violated. In other words, $Q_n$ is determined by energy constraints. Although it is hard to calculate the exact value of $Q_n$, we can approach it at runtime. Based on this insight, we use tools from Lyapunov optimization \cite{neely2010stochastic} and introduce virtual queue $Q_n(k)$:
\begin{align}\label{evq}
    Q_n(k+1) \triangleq \max\{Q_n(k) - \overline{E}_{n} + E_n(k),0\},
\end{align}
$Q_n(k)$ corresponds to the energy consumption until time slot $k$. If update is too frequent, $Q_n(k)$ will increase and prevent further update. As system evolves, $Q_n(k)$ approximates $Q_n$.

Let $\mathcal{N}_{\text{idle}}(k)$ be the set of devices that are idle at the beginning of time slot $k$. The two auxiliary sets are defined as  
\begin{align*}
    &\mathcal{C}_l(k) \triangleq \left\{n~\left|~\frac{W_{l,n}(h_n(k))}{\overline{D}_{l,n}}\ge VE_{l,n}Q_n(k), ~n\in\mathcal{N}_{\text{idle}}(k)\right.\right\}, \\
    &\mathcal{C}_t(k) \triangleq \left\{n~\left|~\frac{W_{t,n}(h_n(k))}{\overline{D}_{t,n}}\ge VE_{t,n}Q_n(k), ~n\in\mathcal{N}_{\text{idle}}(k)\right.\right\},
\end{align*}
where $V$ is a parameter used to smooth the fluctuation of $Q_n(k)$.

The set $\mathcal{C}_l(k)$ consists of devices eligible for local computing, while $\mathcal{C}_t(k)$ consists of devices eligible for edge computing. The intersection of these two sets may not be empty. To simplify the expression of scheduling policy, we introduce index $I_n(k)$ as 
\begin{equation}\label{index}
\begin{aligned}
        &I_n(k) =\\
         &\left\{\begin{array}{l l}
        \frac{W_{l,n}(h_n(k))}{\overline{D}_{l,n}} - V E_{l,n}Q_n(k),~&\text{if $n\in\mathcal{C}_l(k) - \mathcal{C}_t(k)$,}\\
        \frac{W_{t,n}(h_n(k))}{\overline{D}_{t,n}} - V E_{t,n}Q_n(k),~&\text{if $n\in\mathcal{C}_t(k) - \mathcal{C}_l(k)$,}\\
        \frac{W_{t,n}(h_n(k))}{\overline{D}_{t,n}} - \frac{W_{l,n}(h_n(k))}{\overline{D}_{l,n}} ~&\text{if $n\in\mathcal{C}_l(k) \cap \mathcal{C}_t(k)$.}\\
        + V(E_{l,n}-E_{t,n})Q_n(k),~&
    \end{array}\right.
\end{aligned}
\end{equation}

Based on the insights from \eqref{KKT-final-2}, we propose a \emph{Max-Weight scheduling policy} $\pi_{\text{MW}}$, which makes scheduling decisions at each time slot as shown in algorithm \ref{Alg-0}. In this algorithm, the scheduler first decides which devices to offload based on their values of $I_n(k)$, which is derived from \eqref{eq4-2}. Subsequently, those devices that are still idle will perform local computing if they fall within the set $\mathcal{C}_{l}(k)$, as dictated by \eqref{eq4-1}.
\begin{algorithm}[htp]
\caption{Max-Weight scheduling}
\begin{algorithmic}[1]
\REQUIRE
    The number of idle channels $m(k)$.\\
\STATE Sort devices in $\mathcal{C}_t(k)$ in descending order according to the value of $I_n(k)$. The result is $(n_1,n_2,\dots,n_S)$. $S$ is the total number of devices in this set.
\STATE $s \gets 1$
\WHILE {$s\le S$}
    \IF{$m(k)>0$ and $I_{n_s}(k)\ge 0$}
        \STATE $u_{t,n_s}(k)\gets 1$, $m(k)\gets m(k)-1$
    \ELSIF{$n_s \in \mathcal{C}_l(k) \cap \mathcal{C}_t(k)$}
        \STATE $u_{l,n_s}(k)\gets 1$
    \ENDIF
    \STATE $s \gets s+1$
\ENDWHILE
\FOR{$n\in\mathcal{C}_l(k)-\mathcal{C}_t(k)$}
        \STATE $u_{l,n}(k)\gets 1$
\ENDFOR
\end{algorithmic}
\label{Alg-0}
\end{algorithm}

The term $VQ_n(k)$ plays the role of $Q_n$ in \eqref{eq4-1} and \eqref{eq4-2}. Since we want $\limsup_{k\to\infty}VQ_n(k)$ and $\liminf_{k\to\infty}VQ_n(k)$ to be close to $Q_n$, it is expected that a smaller $V$ enjoys a better performance, because a small $V$ can smooth fluctuations in the value of $Q_n(k)$. This conjecture is substantiated in Section V.

Theorem \ref{thm-MW} demonstrates that algorithm \ref{Alg-0} maximizes a term that is linear in $u_{l,n}$ and $u_{t,n}$, as follows:
\begin{theorem}\label{thm-MW}
    Algorithm \ref{Alg-0} makes scheduling decisions $\bm{u}_{l}(k)$ and $\bm{u}_t(k)$ to maximize the following,
    \begin{equation}
        \label{eq-weight}
        \begin{aligned}
            &\sum_{n\in\mathcal{N}_{\normalfont\text{idle}}(k)} \left(\frac{W_{l,n}(h_n(k))}{\overline{D}_{l,n}} - V E_{l,n}Q_n(k)\right) u_{l,n}(k)\\
            &+\sum_{n\in\mathcal{N}_{\normalfont\text{idle}}(k)} \left(\frac{W_{t,n}(h_n(k))}{\overline{D}_{t,n}} - V E_{t,n}Q_n(k)\right) u_{t,n}(k).
        \end{aligned}
    \end{equation}
\end{theorem}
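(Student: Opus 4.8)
The plan is to recognize the objective \eqref{eq-weight} as a modular (linear) function of the per-device decisions, subject only to the coupling constraint that at most $m(k)$ idle devices may offload, and to show that Algorithm \ref{Alg-0} implements the greedy solution of the resulting cardinality-constrained selection problem. To fix notation, for each idle device $n\in\mathcal{N}_{\text{idle}}(k)$ write $A_n \triangleq \frac{W_{l,n}(h_n(k))}{\overline{D}_{l,n}} - V E_{l,n} Q_n(k)$ and $B_n \triangleq \frac{W_{t,n}(h_n(k))}{\overline{D}_{t,n}} - V E_{t,n} Q_n(k)$ for the coefficients of $u_{l,n}(k)$ and $u_{t,n}(k)$. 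Under $u_{l,n}+u_{t,n}\le 1$ and $u_{l,n},u_{t,n}\in\{0,1\}$, each idle device has exactly three feasible choices: stay idle (contribution $0$), compute locally (contribution $A_n$), or offload (contribution $B_n$); the only inter-device coupling is $\sum_{n} u_{t,n}(k)\le m(k)$. By definition, among idle devices $\mathcal{C}_l(k)=\{n:A_n\ge 0\}$ and $\mathcal{C}_t(k)=\{n:B_n\ge 0\}$.

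First I would decouple the problem by conditioning on the offloading set $T\triangleq\{n:u_{t,n}(k)=1\}$. For any fixed feasible $T$ with $|T|\le m(k)$, the devices outside $T$ are unconstrained and each should pick the better of idling and local computing, contributing $\max\{0,A_n\}$. Hence the maximum of \eqref{eq-weight} over assignments with offloading set $T$ equals $\sum_{n\in T}B_n+\sum_{n\notin T}\max\{0,A_n\}$, which I rewrite as
\begin{equation}
\sum_{n\in\mathcal{N}_{\text{idle}}(k)}\max\{0,A_n\}\;+\;\sum_{n\in T} g_n,\qquad g_n\triangleq B_n-\max\{0,A_n\}.
\end{equation}
The first sum is independent of $T$, so maximizing \eqref{eq-weight} is equivalent to choosing $T$ with $|T|\le m(k)$ to maximize the modular objective $\sum_{n\in T}g_n$. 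This is modular maximization under a uniform-matroid (cardinality) constraint, for which the greedy rule---take the devices with the largest $g_n$, restricted to those with $g_n\ge 0$, up to $m(k)$ of them---is optimal by the standard exchange argument.

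Next I would verify that the index $I_n(k)$ coincides with this offloading gain on exactly the relevant devices. Checking the three cases of \eqref{index}: if $n\in\mathcal{C}_t-\mathcal{C}_l$ then $A_n<0$ and $g_n=B_n=I_n(k)$; if $n\in\mathcal{C}_l\cap\mathcal{C}_t$ then $A_n\ge 0$ and $g_n=B_n-A_n=I_n(k)$; if $n\in\mathcal{C}_l-\mathcal{C}_t$ then $B_n<0\le A_n$, so $g_n=B_n-A_n<0$; and if $n\notin\mathcal{C}_l\cup\mathcal{C}_t$ then $g_n=B_n<0$. Thus every device with $g_n\ge 0$ lies in $\mathcal{C}_t(k)$ and satisfies $I_n(k)=g_n$, so it is harmless to search only over $\mathcal{C}_t(k)$ and to use $I_n(k)$ as the sort key. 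The \textbf{while} loop of Algorithm \ref{Alg-0} does precisely this: it scans $\mathcal{C}_t(k)$ in descending order of $I_n(k)$ and offloads a device as long as a channel remains ($m(k)>0$) and $I_{n_s}(k)\ge 0$, thereby selecting the optimal $T$.

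It remains to confirm that the non-offloaded devices receive the optimal $\max\{0,A_n\}$ contribution, i.e. that each such device computes locally precisely when $A_n\ge 0$ (equivalently $n\in\mathcal{C}_l(k)$). Devices in $\mathcal{C}_l(k)-\mathcal{C}_t(k)$ never offload (they are absent from the sorted list and have $g_n<0$) and are assigned local computing by the final \textbf{for} loop; a device in $\mathcal{C}_l(k)\cap\mathcal{C}_t(k)$ that is not offloaded---because the channels are exhausted or $I_{n_s}(k)<0$---is assigned local computing by the \textbf{else if} branch; and any idle device with $A_n<0$ is never assigned local computing and correctly remains idle. Each case matches the optimal non-offload assignment, so Algorithm \ref{Alg-0} attains the maximum of \eqref{eq-weight}. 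I expect the main obstacle to be the bookkeeping of the three index cases and confirming that restricting the greedy search to $\mathcal{C}_t(k)$ discards no optimal offloading candidate; the cardinality-constrained optimality of greedy itself is routine.
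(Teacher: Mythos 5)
Your proof is correct, and it reaches the same conclusion as the paper's by a cleaner, more systematic route. The paper's proof in Appendix D is an exchange argument: it starts from a baseline policy (all of $\mathcal{C}_l(k)$ computes locally, the top of $\mathcal{C}_t(k)-\mathcal{C}_l(k)$ offloads) and then improves it by swapping devices into the offloading set whenever doing so increases the weight, arguing that finitely many swaps terminate at the algorithm's output. You instead condition on the offloading set $T$, observe that every non-offloaded device optimally contributes $\max\{0,A_n\}$, and reduce the whole problem to maximizing the modular function $\sum_{n\in T}g_n$ with $g_n = B_n-\max\{0,A_n\}$ under the cardinality constraint $|T|\le m(k)$, for which taking the largest nonnegative $g_n$ is trivially optimal. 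The decomposition makes explicit something the paper leaves implicit: the three-case definition of $I_n(k)$ in \eqref{index} is exactly the opportunity cost $B_n-\max\{0,A_n\}$ of offloading versus the best non-offload alternative, which is why sorting $\mathcal{C}_t(k)$ by $I_n(k)$ is the right greedy key; your case check also confirms that $g_n<0$ outside $\mathcal{C}_t(k)$, so restricting the search to $\mathcal{C}_t(k)$ loses nothing. The paper's exchange argument is essentially a hand-rolled proof of the same greedy optimality, so the two proofs share their mathematical core, but your formulation is more transparent and would generalize more readily (e.g., to other per-device action sets or matroid constraints). One small point to keep: your final paragraph verifying the non-offloaded assignments (the \textbf{else if} branch for $\mathcal{C}_l(k)\cap\mathcal{C}_t(k)$ and the final \textbf{for} loop for $\mathcal{C}_l(k)-\mathcal{C}_t(k)$) is necessary and should not be omitted, since the reduction only establishes optimality of the offloading set, not of the full assignment.
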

\begin{proof}
 See Appedix \ref{app-2}.
\end{proof}

The following theorem establishes that, under mild assumptions, policy $\pi_{\text{MW}}$ satisfies the energy constraints in \eqref{target}.

\begin{theorem}\label{thm_4}
    For any $n\in\mathcal{N}$, if there exists $D_n^*$ such that $D_{l,n}$, $D_{t,n}$ and $D_{e,n}$ are smaller than $D_n^*$, then
    \begin{align}
        \limsup_{K\to\infty} \frac{1}{K}\mathbb{E}_{\pi_{\text{MW}}} 
                \left[
                    \sum_{k=1}^{K} E_n(k) 
                \right] \le \overline{E}_n, \forall n\in\mathcal{N}.
    \end{align}
\end{theorem}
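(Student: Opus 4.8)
The plan is to treat the virtual queue $Q_n(k)$ as a ledger for the accumulated energy deficit and to show it grows sublinearly, from which the constraint follows by telescoping. Starting from the recursion \eqref{evq}, dropping the outer maximum gives $Q_n(k+1)\ge Q_n(k)-\overline{E}_n+E_n(k)$, i.e. $E_n(k)\le Q_n(k+1)-Q_n(k)+\overline{E}_n$. Summing over $k=1,\dots,K$, telescoping, taking expectations and dividing by $K$ yields
\begin{equation}
\limsup_{K\to\infty}\frac{1}{K}\mathbb{E}_{\pi_{\text{MW}}}\left[\sum_{k=1}^{K}E_n(k)\right] \le \overline{E}_n + \limsup_{K\to\infty}\frac{\mathbb{E}_{\pi_{\text{MW}}}[Q_n(K+1)]}{K}.
\end{equation}
Hence it suffices to prove that the virtual queue is mean rate stable, i.e. $\mathbb{E}_{\pi_{\text{MW}}}[Q_n(K)]/K\to 0$. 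I would obtain this from a uniform deterministic bound $Q_n(k)\le \hat{Q}_n$ for all $k$.

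To bound $Q_n(k)$ I would combine two facts. First, bounded latency controls the increments: the queue only increases while the device consumes energy, that is, during a local-computing or a transmission phase, each of which lasts fewer than $D_n^*$ slots with per-slot energy at most $\max\{E_{l,n},E_{t,n}\}$; therefore over any single update the queue rises by at most $\Delta_n\triangleq \max\{E_{l,n},E_{t,n}\}D_n^*$ (during the edge-computing phase $E_n(k)=0$, so $Q_n$ only decreases there). Second, the eligibility sets throttle updates when the queue is large: the device can begin an update only at a slot $k$ with $n\in\mathcal{C}_l(k)\cup\mathcal{C}_t(k)$, which forces $VE_{l,n}Q_n(k)\le W_{l,n}(h_n(k))/\overline{D}_{l,n}$ or $VE_{t,n}Q_n(k)\le W_{t,n}(h_n(k))/\overline{D}_{t,n}$.

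The crux, and the main obstacle, is to close the loop between the queue and the AoI. Right after any update completes, the latency satisfies $d_n\le 2D_n^*$, so $h_n$ is reset to at most $C_n\triangleq 2D_n^*+1$ and both $W_{l,n}(h_n)$ and $W_{t,n}(h_n)$ are small; while the device stays idle, $h_n$ grows by one per slot but $Q_n$ drains by $\overline{E}_n$ per slot. Since $W_{l,n},W_{t,n}$ are increasing and unbounded, re-triggering an eligibility threshold while $Q_n$ is large demands many idle slots, over which $Q_n$ necessarily drains. I would formalize this as a backward argument: define a level $q_n^*$ (through the inverses of $W_{l,n},W_{t,n}$ together with $C_n$, $\Delta_n$ and $\overline{E}_n$) such that triggering an update with $Q_n$ exceeding $q_n^*$ requires an idle interval long enough to drain more than $\Delta_n$; then the queue at the start of the preceding update must have been strictly larger. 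Iterating back to the first update after slot $1$ contradicts the finiteness of $Q_n(1)$, so the queue value at every update instant is at most $q_n^*+\Delta_n$. Adding the at-most-$\Delta_n$ increment accrued during an update, and noting that idle slots only decrease $Q_n$, gives the uniform bound $\hat{Q}_n$.

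With $Q_n(k)\le\hat{Q}_n$ deterministically, the residual term $\mathbb{E}_{\pi_{\text{MW}}}[Q_n(K+1)]/K\to 0$, and the displayed inequality collapses to $\limsup_{K\to\infty}\frac{1}{K}\mathbb{E}_{\pi_{\text{MW}}}[\sum_{k=1}^{K}E_n(k)]\le\overline{E}_n$ for every $n$, as claimed. I expect the delicate point to be making the backward coupling argument rigorous, in particular handling the edge-computing phase (during which an update is in progress yet the queue falls) and the boundary behavior of the eligibility thresholds near the reset AoI value $C_n$.
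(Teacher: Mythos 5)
Your proposal is correct and follows essentially the same route as the paper's proof: telescoping the virtual-queue recursion \eqref{evq} reduces the claim to mean rate stability of $Q_n$, and boundedness of the queue at update instants is obtained from the same mechanism --- a large queue value, combined with the small post-update AoI (at most $2D_n^*+1$) and the eligibility thresholds $\mathcal{C}_l(k),\mathcal{C}_t(k)$, forces an idle stretch of at least $t=\lceil \max(E_{l,n},E_{t,n})D_n^*/\overline{E}_n\rceil$ slots, which drains the queue by more than the maximal per-update increment $\max(E_{l,n},E_{t,n})D_n^*$. The only cosmetic difference is that the paper runs this as a forward induction from the first sufficiently large queue value, whereas you phrase it as a backward contradiction; the key estimates are identical.
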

\begin{proof}
    Let $k_i$ be the time slot at which device $n$ starts its $i$-th round of local computing or offloading. Because the delay is bounded, we have 
    \begin{align}\label{thm-4-4}
        Q(k_{i+1}) \le Q(k_i) + \max(E_{l,n},E_{t,n})D_n^*.
    \end{align}
    We will prove that there exists $L$ such that 
    \begin{align}\label{thm-4-1}
         \limsup_{i\to\infty} Q_n(k_i) \le L.
    \end{align}

    Given $k_i$, there exists $s_i$ such that 
    \begin{equation}\label{thm-4-2}
        \begin{aligned}
             &Q_n(k_i) \ge Q_n(k_1)+(s_i-1)\max(E_{l,n},E_{t,n})D_n^*, \\
             &Q_n(k_i) < Q_n(k_1)+s_i\max(E_{l,n},E_{t,n})D_n^*. 
        \end{aligned}
    \end{equation}

    Let $s^*=\sup \{s_i,i\ge 1\}$. If $s^*$ is finite, then \eqref{thm-4-1} holds trivially. Otherwise, consider an $i^*$ such that 
    \begin{align}\label{eq4-3}
         Q_n(k_{i^*})-t\overline{E}_n \ge \max\left(\frac{W_{l,n}(2D_n^*+t)}{V E_{l,n}}, \frac{W_{t,n}(2D_n^*+t)}{V E_{t,n}}\right), 
    \end{align}
    and 
    \begin{equation}
    \begin{aligned}
       &t\overline{E}_n + \max\left(\frac{W_{l,n}(2D_n^*+t)}{V E_{l,n}}, \frac{W_{t,n}(2D_n^*+t)}{V E_{t,n}}\right) \\
       &\le Q_n(k_1)+(s_{i^*}-1)\max(E_{l,n},E_{t,n})D_n^*,
    \end{aligned}     
    \end{equation}
    where $t\triangleq \left\lceil\frac{\max(E_{l,n},E_{t,n})D_n^*}{\overline{E}_n}\right\rceil$. 
    \eqref{eq4-3} means that device $n$ is idle for $t$ time slots at least, after the completion of the ${i^*}$-th status update. Therefore,
    \begin{align}\label{thm-4-3}
        Q_n(k_{i^*+1}) \le Q_n(k_{i^*})+\max(E_{l,n},E_{t,n})D_n^*-t\overline{E}_n.
    \end{align}
    According to the definition of $t$, \eqref{thm-4-3} yields that $Q_n(k_{i^*+1}) \le Q_n(k_{i^*})$.
    
    If $Q_n(k_{i^*+1})$ falls in the range specified in \eqref{thm-4-2} with $s_{i^*}$, repeating the analysis above gives that $Q_n(k_{i^*+2})\le Q_n(k_{i^*+1})$. If $Q_n(k_{i^*+1}) < Q_n(k_1)+(s_i^*-1)\max(E_{l,n},E_{t,n})D_n^*$, due to \eqref{thm-4-4}, we have 
    \begin{align}
         Q_n(k_{i^*+2}) \le Q_n(k_1)+s_{i^*}\max(E_{l,n},E_{t,n})D_n^*.
    \end{align} 

    Based on induction, we conclude that 
    \begin{align}
         Q_n(k_{j}) \le Q_n(k_1)+s_{i^*}\max(E_{l,n},E_{t,n})D_n^*, \forall j\ge i^*,
    \end{align} 
    and thus \eqref{thm-4-1} holds.

    Recall the definition of $Q_n(k)$ in \eqref{evq}, we have for all $k\ge 1$:
    \begin{align}
        \frac{1}{K}\sum_{k=1}^{K}E_n(k) \le \frac{Q_n(K+1)}{K}-\frac{Q_n(1)}{K} + \overline{E}_n.
    \end{align}
    Taking expectations of the above and letting $K\to\infty$ yields:
    \begin{align}
        \limsup_{K\to\infty} \frac{1}{K}\mathbb{E}_{\pi^{\text{MW}}} 
                \left[
                    \sum_{k=1}^{K} E_n(k) 
                \right] \le \overline{E}_n.
    \end{align}
\end{proof}

One important distinction between our work and other studies that use Max-Weight policy for scheduling, such as \cite{kadota2018scheduling}, is that the set of idle devices in our problem varies with time. Thus, conventional approaches based on strongly stable queue techniques cannot be applied to our problem directly.

Although a general performance guarantee is difficult to establish, the following proposition provides insight into the performance gap for a special case:

\begin{prop}\label{prop-2}
    Let $J^{\pi_{\text{MW}}}$ be the average AoI penalty under policy $\pi_{\text{MW}}$.
    When the penalty function is $f_n(x)=\alpha_n x^p$, $p>0$, and $D_{l,n} = D_{t,n} = 1$, $D_{e,n} = 0$, $\forall n\in\mathcal{N}$, $J^{\pi_{\text{MW}}}$ satisfies
    \begin{align}\label{performance}
        \left(\frac{J^{\pi_{\text{MW}}}}{p+1}\right)^{p+1} \le J^*\left(\frac{B}{p}+J^{\pi_{\text{MW}}}\right)^{p},
    \end{align}
    where $J^*$ is the lower bound from \eqref{upper-opt}, and $B$ is defined as 
    \begin{align}
        B \triangleq \frac{V}{2}\sum_{n\in\mathcal{N}}(\max(E_{l,n},E_{t,n})-\overline{E}_n)^2.
    \end{align}
\end{prop}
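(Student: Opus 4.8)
The plan is to run a Lyapunov drift-plus-penalty argument specialized to this single-slot, power-penalty regime, closing it with a Hölder/Jensen step that absorbs the one-degree mismatch between the scheduling index and the penalty. First I would record the simplifications the hypotheses buy. Since $D_{l,n}=D_{t,n}=1$ and $D_{e,n}=0$, every device is idle at the start of every slot, so at each slot the feasible action set is the whole of $\mathcal N$ subject only to the channel budget; also $F_n(0)=0$, and the two weight functions \eqref{eq1-3} and \eqref{eq1-4} both collapse to $W_n(x)\triangleq x\tilde f_n(x)-\tilde F_n(x)=\frac{p}{p+1}\alpha_n x^{p+1}$, with $f_n(x)=\alpha_n x^{p}$. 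Theorem~\ref{thm-MW} then says $\pi_{\text{MW}}$ maximizes $\sum_n W_n(h_n(k))\sigma_n(k)-V\sum_n Q_n(k)E_n(k)$ each slot, where $\sigma_n\triangleq u_{l,n}+u_{t,n}$.

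The driver is the quadratic potential on the energy queues \eqref{evq}. Squaring the recursion and summing yields the one-slot drift bound $\frac12\sum_n\!\big(Q_n(k{+}1)^2-Q_n(k)^2\big)\le \frac{B}{V}+\sum_n Q_n(k)(E_n(k)-\overline E_n)$, in which the constant is exactly $\frac{B}{V}=\frac12\sum_n(\max(E_{l,n},E_{t,n})-\overline E_n)^2$. The bridge to the penalty is the identity $W_n'(x)=p\tilde f_n(x)$: while device $n$ is unserved its age rises by one and $W_n(h_n(k{+}1))-W_n(h_n(k))=p\!\int_{h_n}^{h_n+1}\!\tilde f_n\ge p f_n(h_n(k))$, whereas on a service slot $W_n$ drops from the peak to the bounded value $W_n(h^{\min})$. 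I would therefore add the secondary potential $\Psi(k)=\sum_n W_n(h_n(k))$, whose one-slot drift simultaneously exposes the penalty $p\sum_n f_n(h_n(k))$ and the reward $-\sum_n\sigma_n(k)W_n(h_n(k))$.

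Working with $\Phi(k)=VL(k)+\Psi(k)$, the control-dependent part of the drift upper bound is exactly the objective that $\pi_{\text{MW}}$ minimizes in Theorem~\ref{thm-MW}, up to nonnegative lower-order terms in $f_n$ and the bounded reset term, which I would discard. I would then compare against the action of the stationary randomized policy built from the optimizer $\bm{\rho}^*$ of \textbf{P4} \eqref{upper-opt}: device $n$ is served at the rates $a_n\rho^*_{l,n}$ (locally) and $b_n\rho^*_{t,n}$ (offloaded), so its mean per-slot energy is $\overline E_n(\rho^*_{l,n}+\rho^*_{t,n})\le\overline E_n$ and its total rate is $\mu_n^*=1/G_n$. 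Telescoping over $k\le K$, dividing by $K$, and sending $K\to\infty$, the energy terms cancel against the budget, the boundary term $\frac1K\mathbb E[\Phi(K{+}1)]$ vanishes because Theorem~\ref{thm_4} keeps the queues---and hence the time-average age---bounded, and the surviving relation has the shape $\sum_n\mu_n^*\overline{W_n}\le B+pJ^{\pi_{\text{MW}}}$, where $\overline{W_n}$ is the time-average of $W_n(h_n(k))$ under $\pi_{\text{MW}}$.

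The final step, which I expect to be the crux, converts this linear relation into the stated power inequality. Because $W_n\propto x^{p+1}$ carries one more power than $f_n\propto x^{p}$, I would apply Jensen to the time-average law of $h_n$, giving $\overline{W_n}=\frac{p}{p+1}\alpha_n\mathbb E[h_n^{p+1}]\ge\frac{p}{p+1}\alpha_n^{-1/p}J_n^{(p+1)/p}$ with $J_n=\alpha_n\mathbb E[h_n^{p}]$, and then apply Hölder across devices with the conjugate exponents $\frac{p+1}{p}$ and $p+1$ to bound $J^{\pi_{\text{MW}}}=\sum_n J_n$ by $\big(\sum_n\mu_n^*\overline{W_n}\big)^{p/(p+1)}\big(\sum_n\alpha_n(\mu_n^*)^{-p}\big)^{1/(p+1)}$. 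Since $\sum_n\alpha_n(\mu_n^*)^{-p}=(p+1)J^*$, substituting the drift relation and raising to the power $p+1$ reproduces $\big(\tfrac{J^{\pi_{\text{MW}}}}{p+1}\big)^{p+1}\le J^*\big(\tfrac{B}{p}+J^{\pi_{\text{MW}}}\big)^{p}$, with the exponents and the factor $(p+1)^{p+1}$ emerging precisely from composing the Jensen exponent $(p{+}1)/p$ with the Hölder pair. The delicate points I would have to control are: the per-slot channel feasibility of the randomized reference (its offload count must be thinned to respect $\sum_n u_{t,n}\le M$ so that the Max-Weight comparison is valid); the ``overshoot'' at service epochs, i.e. bounding $W_n(h_n)$ at a scheduling instant by its threshold plus a single-slot increment since a busy channel can delay an eligible device; and keeping the discrete-sum versus integral gap between $F_n$ and $\tilde F_n$, together with the bounded reset constant $W_n(h^{\min})$, on the favorable side so they do not inflate $B$.
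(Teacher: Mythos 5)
Your plan is essentially the paper's own proof: a drift-plus-penalty argument with the quadratic energy-queue Lyapunov term (yielding exactly $B/V$), an age potential built from $W_n(x)=\tfrac{p}{p+1}\alpha_n x^{p+1}$, comparison against a stationary randomized policy extracted from the lower-bound problem (the paper justifies its feasibility via the achievable-rate-region lemma of Neely rather than explicit thinning), and the same Jensen-then-H\"older step using $\sum_n\alpha_n(\mu_n^*)^{-p}=(p+1)J^*$ to produce \eqref{performance}. The one place to be careful is the age-potential drift: you need an \emph{upper} bound on $W_n(h_n(k)+1)-W_n(h_n(k))$ (your stated inequality $\ge p f_n(h_n(k))$ points the wrong way), and the paper closes this by bounding the increment by $h_n(f_n(h_n)-f_n(h_n-1))$ and then showing via a peak-age renewal computation that its time average is at most $pJ_n^{\pi_{\text{MW}}}$ --- the endpoint corrections per update cycle do not vanish for free, so this renewal step is needed rather than a purely slot-wise comparison.
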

\begin{proof}
    See Appendix \ref{app-prop-2}.
\end{proof}
\begin{remark}\theoremstyle{remark}
    When $p=1$, the penalty function is in linear form, and the target becomes the weighted time average age. Let $p=1$ in \eqref{performance}, we obtain the following inequality:
    \begin{align}
        (J^{\pi_{\text{MW}}}-2J^*)^2 \le 4J^*(B+J^*),
    \end{align}
    which yields,
    \begin{align}
        \frac{J^{\pi_{\text{MW}}}}{J^*} \le 2 + 2\sqrt{\frac{B}{J^*}+1}.
    \end{align}
    This suggests that the weighted average age achieved by the Max-Weight policy is bounded within approximately four times of the lower bound.
\end{remark}

\section{Numerical Results}
\label{sec-4}
In this part, we evaluate the performance of the proposed policy under various settings. In addition to extensive simulations on synthetic data, we also apply this policy to a video tracking task and carry out experiments on ILSVRC17-VID dataset.

\begin{figure*}[t]
\centering
     \begin{subfigure}[b]{0.32\textwidth}
         \centering
         \includegraphics[width=\textwidth]{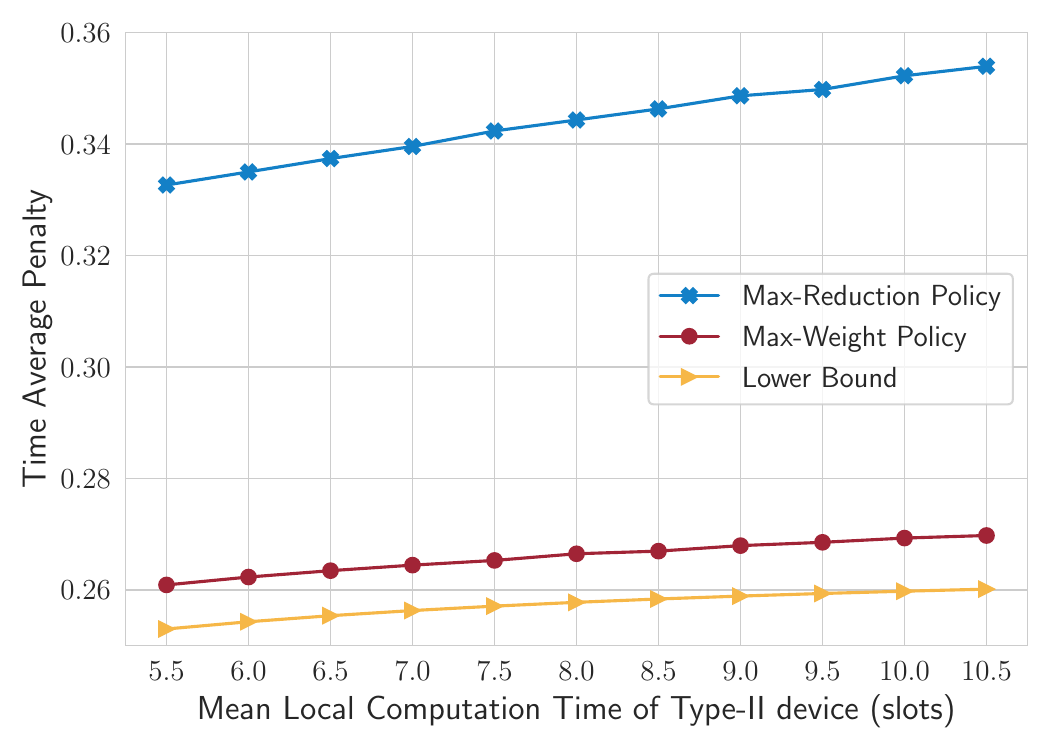}
         \caption{Composite}
         \label{fig:composite}
     \end{subfigure}
     \begin{subfigure}[b]{0.32\textwidth}
         \centering
         \includegraphics[width=\textwidth]{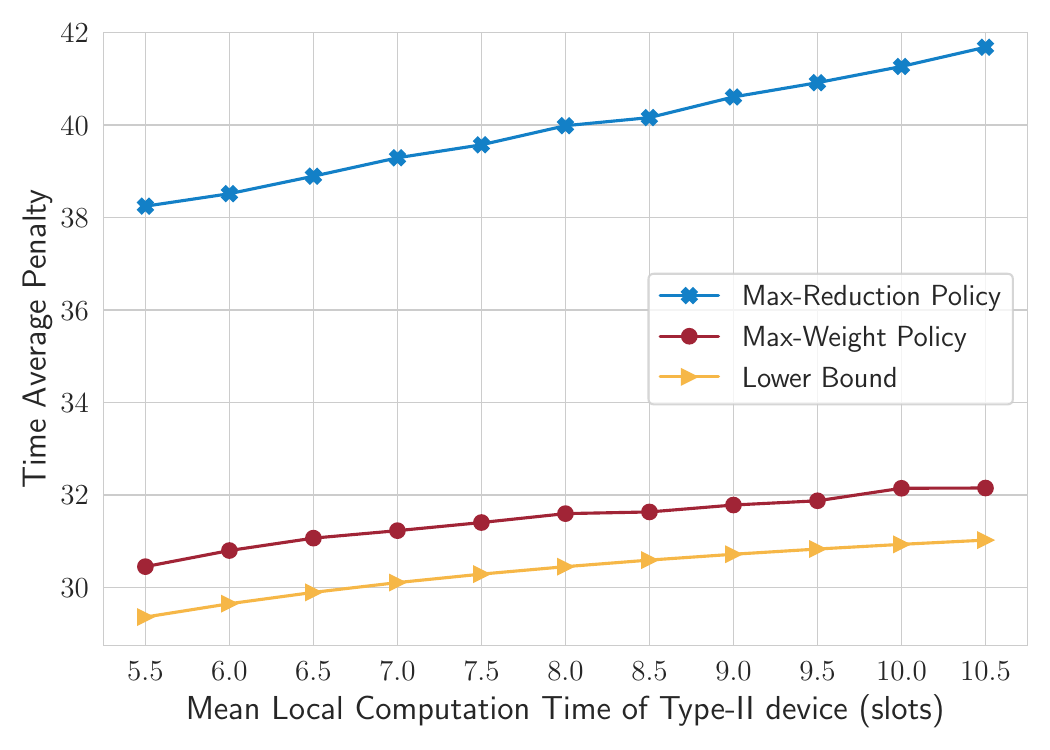}
         \caption{Linear}
         \label{fig:linear}
     \end{subfigure}{}
     \hfill
     \begin{subfigure}[b]{0.32\textwidth}
         \centering
         \includegraphics[width=\textwidth]{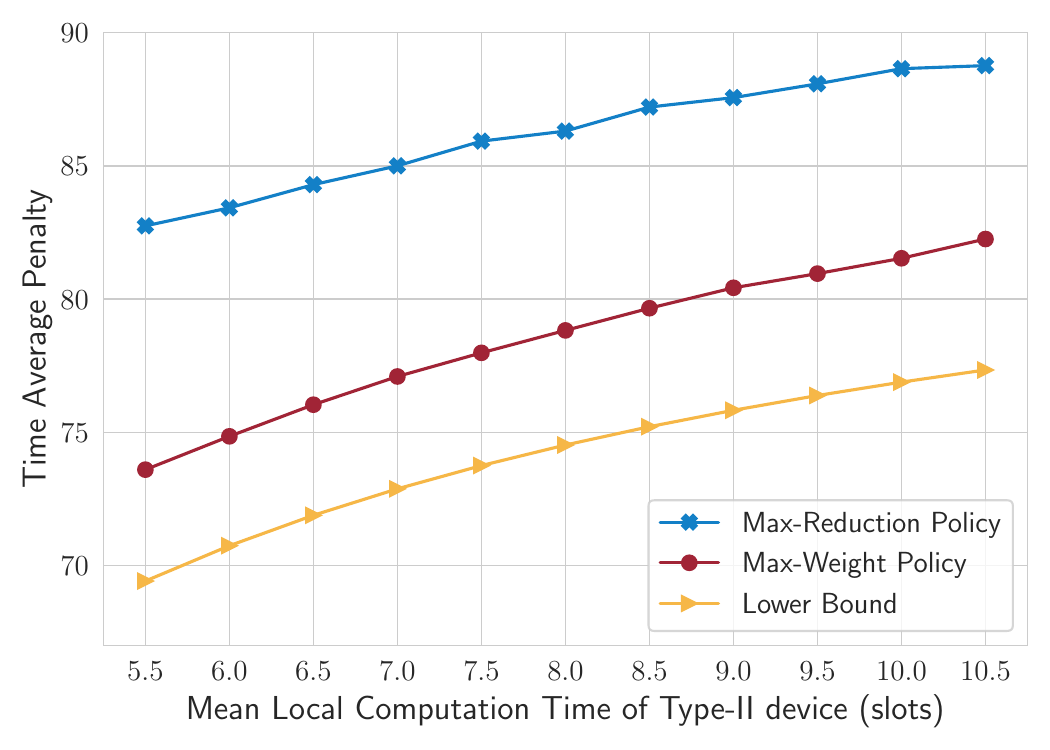}
         \caption{Square}
         \label{fig:square}
     \end{subfigure}
     \hfill
        \caption{Performance comparison.}
        \label{fig:Sim-1}
\end{figure*}

\subsection{Simulation Results}
Scheduling decisions depend on various factors, including the form of penalty function, computation delay, transmission delay, etc. To facilitate experiments, the set of devices is divided into 2 types. Part of the simulation settings are listed in Table \ref{table:setting}, where $U(a,b)$ means taking values uniformly in the set $\{a,a+1,\dots,b\}$. In the first simulation, the delay distribution of Type-II devices' local computing delay follows $U(1,x)$ with $x$ increasing from 10 to 20.

Different kinds of penalty functions are considered in the simulation, including linear function, square function, and a special type of composite function, as shown in Table \ref{table:function}.

By varying the distribution of Type-II devices' local computing delay, we obtain the results shown in Fig. \ref{fig:Sim-1}. The number of devices is $30$. Half of them are Type-I, and the left are Type-II. The number of orthogonal channels is $3$.  
\emph{Max-Reduction policy} \cite{zhong2019age} is considered for comparison. In Max-Reduction policy, the terms $\frac{W_{l,n}(h_n(k))}{\overline{D}_{l,n}}$ and $\frac{W_{t,n}(h_n(k))}{\overline{D}_{t,n}}$ in \eqref{index} are replaced by the expected penalty reduction after scheduling. 
The lower bound is obtained by solving the optimization problem \textbf{P4} numerically. The simulation horizon is $10^6$ slots. $V$ is set to be $0.01$ for composite penalty function, and $1$ for both linear and square penalty functions. 
The performance of the proposed Max-Weight policy is close to the lower bound. It should be noted that the lower bound is derived by using Jenson's inequality, and thus the estimation error between the lower bound and the minimum average AoI penalty gets larger for higher-order penalty functions. 

\begin{table}[!t]
\centering
\caption{Simulation settings.}
\begin{tabular}{@{}c|cc@{}}
\toprule
\textbf{Parameter}                  & \textbf{Type-I}  & \textbf{Type-II} \\ \midrule
Local Comp. Delay (slots)  & $U(1,15)$ & $U(1,x)$       \\
Transmission Delay (slots) & $U(1,3)$  & $U(3,7)$  \\
Edge Comp. Delay (slots)   & $U(1,2)$  & $U(1,2)$  \\
Local Comp. Energy (J/slot)        & 10       & 10       \\
Transmission Energy (J/slot)       & 1       & 1       \\
Energy Budget (J/slot)             & 0.4     & 0.4     \\
\hline
\end{tabular}
\label{table:setting}
\end{table}

\begin{table}[!t]
\centering
\caption{List of penalty functions.}
\begin{tabular}{@{}c|cc@{}}
\toprule
\textbf{Function}                  & \textbf{Type-I}  & \textbf{Type-II} \\ \midrule
Linear  & $x$ & $2x$       \\
Square & $0.1x^2$  & $0.2x^2$  \\
Composite  & $1-(0.02x+1)^{-0.4}$  & $1-(0.14x+1)^{-0.4}$  \\
\hline
\end{tabular}
\label{table:function}
\end{table}

\begin{table}[!t]
\centering
\caption{Coefficient of variation under different penalty functions.}
\begin{tabular}{|c||c|c|c|}
 \hline
 \multicolumn{4}{|c|}{Coefficient of Variation} \\
 \hline
 Penalty Function &  Composite &  Linear & Square\\
 \hline
 Max-Weight   & \textbf{0.129}  &   \textbf{0.062} & \textbf{0.047} \\
 Max-Reduction  & 0.787  &  0.680 & 0.539 \\
 \hline
\end{tabular}
\label{table:KKT}
\vspace{-2em}
\end{table}

It is also interesting to check whether the proposed policy does steer AoI to be aligned with \eqref{KKT-final-2}. Considering the case where the local computing delay of Type-II devices follows $U(1,10)$, we estimate the value of $\alpha$ by plugging the peak AoI value after each computation into \eqref{eq2-1} and calculate the average for each device. And thus we obtain $30$ points, each corresponding to one device. We then calculate the mean value and standard deviation over these $30$ devices. The Coefficient of Variation (CV) is listed in Table \ref{table:KKT}, which is the ratio of the standard deviation to the mean. The result shows that the CV values of Max-Weight policy are one order of magnitude smaller than that of the Max-Reduction policy.

\begin{figure}[t!]
    \centering
    \includegraphics[width=0.45\textwidth]{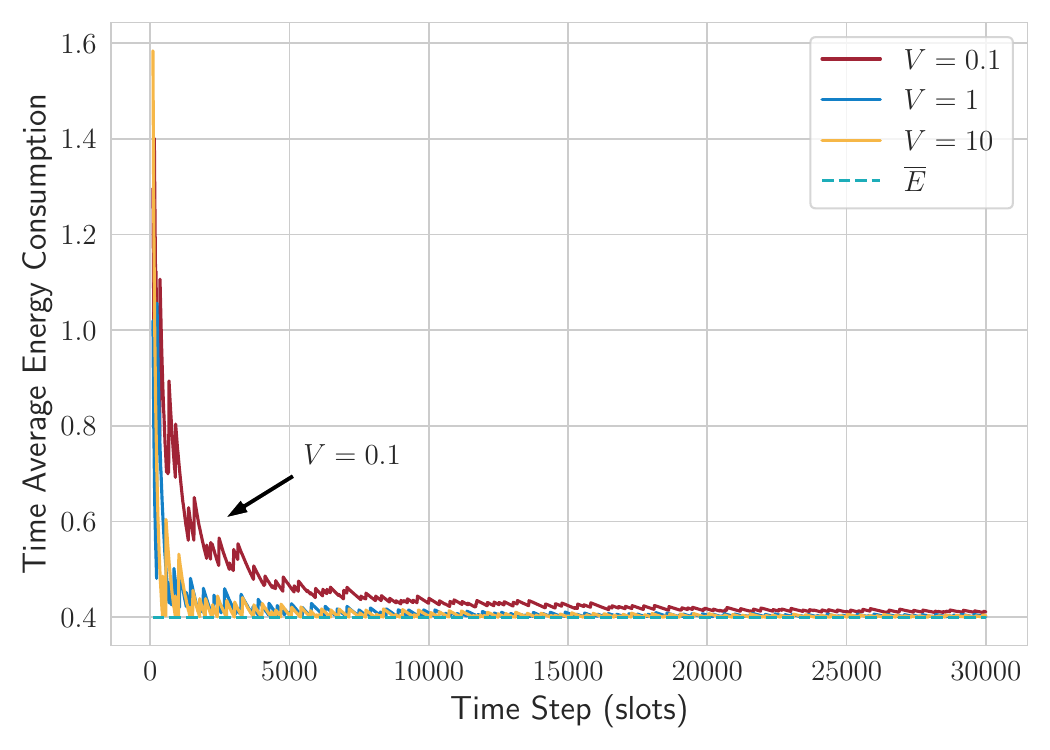}
    \caption{The average energy consumption under different $V$ with square penalty function.}
    \label{fig:energy}
    \vspace{-3em}
\end{figure}

To investigate the influence of parameter $V$, we first check the average energy consumption, as shown in Fig. \ref{fig:energy}. These curves are obtained by running the Max-Weight policy and calculating the moving average of energy consumption. We choose the square penalty function case and plot the first $30000$ time slots. The local computing time for Type-II devices is $U(1,10)$. The cyan dashed line corresponds to the energy budget $0.4$. The first observation is that all three curves converge to the horizontal cyan line, this is in line with Theorem \ref{thm_4}. Another observation is that smaller $V$ results in slower convergence to the expected value. This might be because a larger $V$ means fewer rounds to reach the desired $Q_n$ value.

\begin{figure}[t!]
    \centering
    \includegraphics[width=0.45\textwidth]{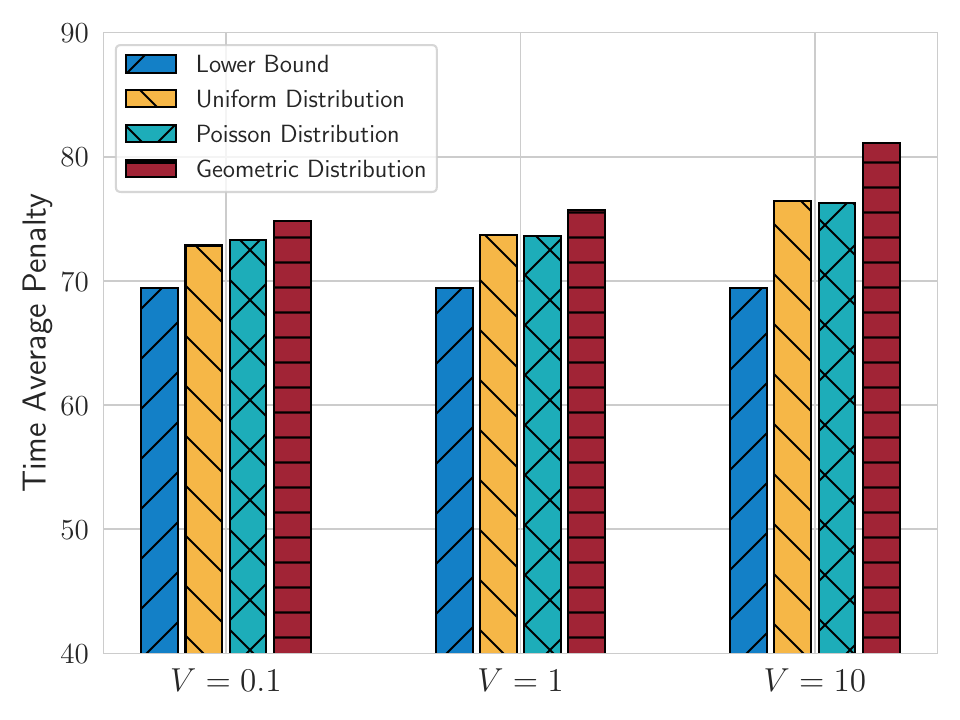}
    \caption{The average penalty under different $V$ and delay distributions.}
    \label{fig:VandDist}
    \vspace{-2em}
\end{figure}

However, the convergence speed comes at the price of performance loss. As shown in Fig. \ref{fig:VandDist}, increasing $V$ from $0.1$ to $10$ leads to larger average penalty. This is because that a larger $V$ increases the fluctuation of the virtual queue $Q(k)$, as discussed in Section \ref{sec-3}. The influence of delay distribution is also studied in Fig. \ref{fig:VandDist}. Fixing the mean value, we run simulations when delay follows uniform distribution, Poisson distribution and geometric distribution respectively. The performance under geometric distribution is the worst. This might be because that the geometric distribution has the largest variance among the three in this case. 

\subsection{Experimental Results}

To show the usage of the proposed policy, we choose an object tracking application for demonstration. Tracking object is key to many visual applications \cite{chen2015glimpse,ran2018deepdecision,liu2019edge,wang2020joint}. Given an object's initial position, the tracker tracks this object as it moves. In this process, tracking error accumulates, and tracking performance would decrease if the tracker has not been refreshed. Fig. \ref{fig:IoU_example} gives an example of the tracking process. The red dash box is the position of the target car, and the blue box is the tracking result. After 30 video frames, the blue box drifts away from the true position. 
\begin{figure}[htbp]
    \centering
    \includegraphics[width=0.45\textwidth]{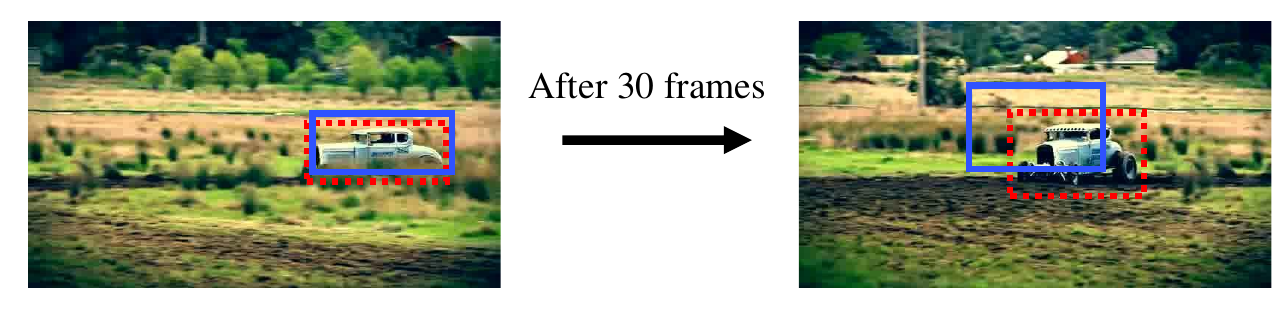}
    \caption{The tracking performance degrades as the object moves.}
    \label{fig:IoU_example}
\end{figure}

To refresh the tracker, object detection algorithms\cite{zhu2017deep,tan2020efficientdet} is called to obtain the current position of the target object. The detection can be done on-device or offloaded to an edge server.
Thus, \emph{object tracking task can be naturally cast as a status update process, where status update refers to the object detection step}. In this case, AoI is defined as the number of video frames since the latest frame used for object detection. 

To evaluate tracking performance, we first calculate the IoU (Intersection over Union). It represents the area of the intersection over that of the union. Let $B_1$ be the tracking position and $B_2$ be the actual position, IoU is defined in \eqref{eq-IoU}. Tracking performance is measured by the probability of the IoU larger than a given threshold $\text{IoU}_\text{th}$: $\mathbb{P}\left(\text{IoU}_{\text{curr}}\ge \text{IoU}_\text{th}\right)$, where $\text{IoU}_{\text{curr}}$ is the IoU of the current frame. 
\begin{align}\label{eq-IoU}
    \text{IoU} = \frac{\text{Area}(B_1\cap B_2)}{\text{Area}(B_1\cup B_2)}.
\end{align}


In this experiment, we choose CSRT algorithm \cite{lukezic2017discriminative} for video tracking, which is faster than DNN-based methods. To isolate influence from the detection algorithm, it is assumed that the detection algorithm can always return the accurate position. 
We first do profiling on ILSVRC17-VID dataset to evaluate the tracking performance as a function of AoI. The IoU thresholds $\text{IoU}_\text{th}$ are set to be 0.5 and 0.75, representing different requirements for tracking accuracy. 

\begin{figure}[htbp]
    \centering
    \includegraphics[width=0.45\textwidth]{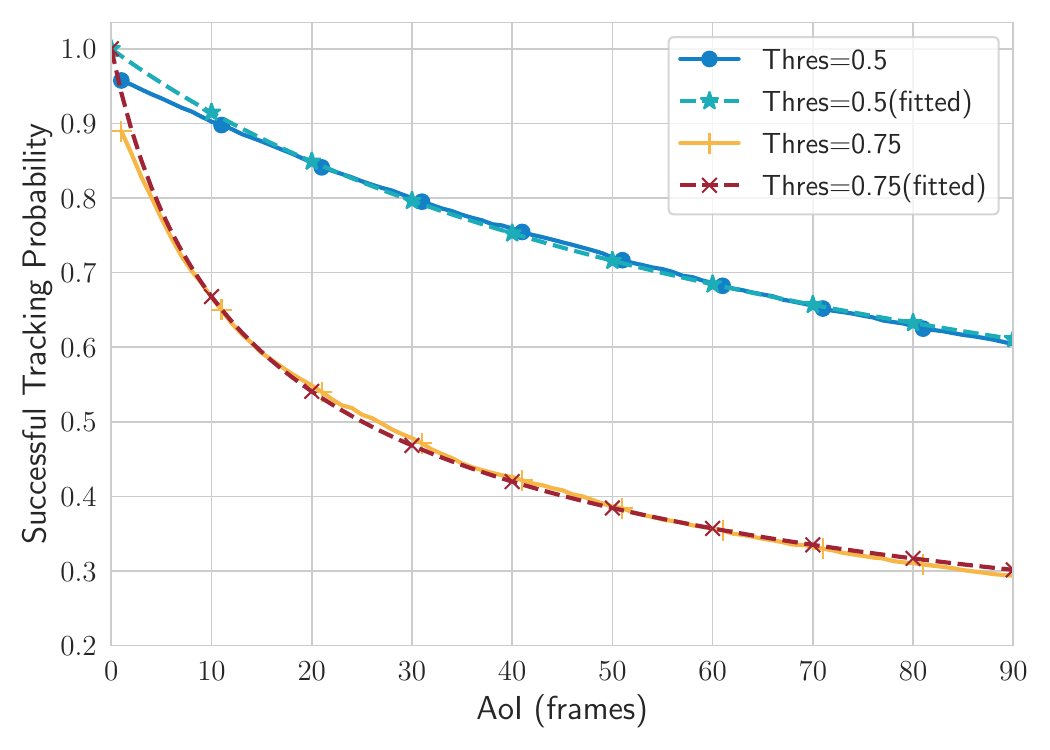}
    \caption{Profiling result of the successful tracking probability as a function of AoI.}
    \label{fig:fitting}
\end{figure}

30\% of the videos in the dataset are chosen for profiling, i.e., 540 videos. For each video, we start from frame 1, initialize the CSRT tracker with bounding boxes, and let it track the following 90 frames. Then, the tracker is refreshed with the actual positions in the 91-st frame and repeats the procedure. Fig. \ref{fig:fitting} shows the profiling result, where these two curves can be fitted by functions of the form of $(ax+1)^{-b}$. Table.\ref{table:fitting} shows the fitted parameters. Thus, the penalty function is modeled as $1-(ax+1)^{-b}$ with the penalty being the tracking failure probability.
\begin{table}[htbp]
\centering
\caption{Fitted parameters with different IoU
thresholds.}
\begin{tabular}{@{}c|cc@{}}
\toprule
\textbf{IoU Requirement}                  & $\bm{a}$  & $\bm{b}$ \\ \midrule
$0.5$  & $0.02149158$ & $0.45788114$       \\
$0.75$ & $0.14155363$  & $0.45766638$  \\
\hline
\end{tabular}
\label{table:fitting}
\end{table}

This experiment is done on a simulator we build on the server. We set the number of tracking devices to be 20, half of which are labeled as Type-I device with $\text{IoU}_\text{th}=0.5$. The other half are labeled as Type-II device with $\text{IoU}_\text{th}=0.75$. As for parameter settings, we set both two types' local computing delay follows Gaussian distribution $\mathcal{N}(200,30)$ms\cite{wang2020joint}, truncated above 0. The local computing power is set to be $2.5$W. For transmission part, the Type-I's transmission delay follows distribution $\mathcal{N}(30,10)$ms, and Type-II's transmission delay follows distribution $\mathcal{N}(60,20)$ms. The transmission power is set to be $250$mW. The energy budget is set to be $300$mW. For the computation delay on the edge side, we test the inference time of Faster-RCNN network\cite{ren2015faster} with ResNet50\cite{he2016deep} as backbone on a Linux server with TITAN Xp GPU. The computation time distribution is shown in Fig. \ref{fig:comp}.
\begin{figure}[!htb]
    \centering
    \includegraphics[width=0.45\textwidth]{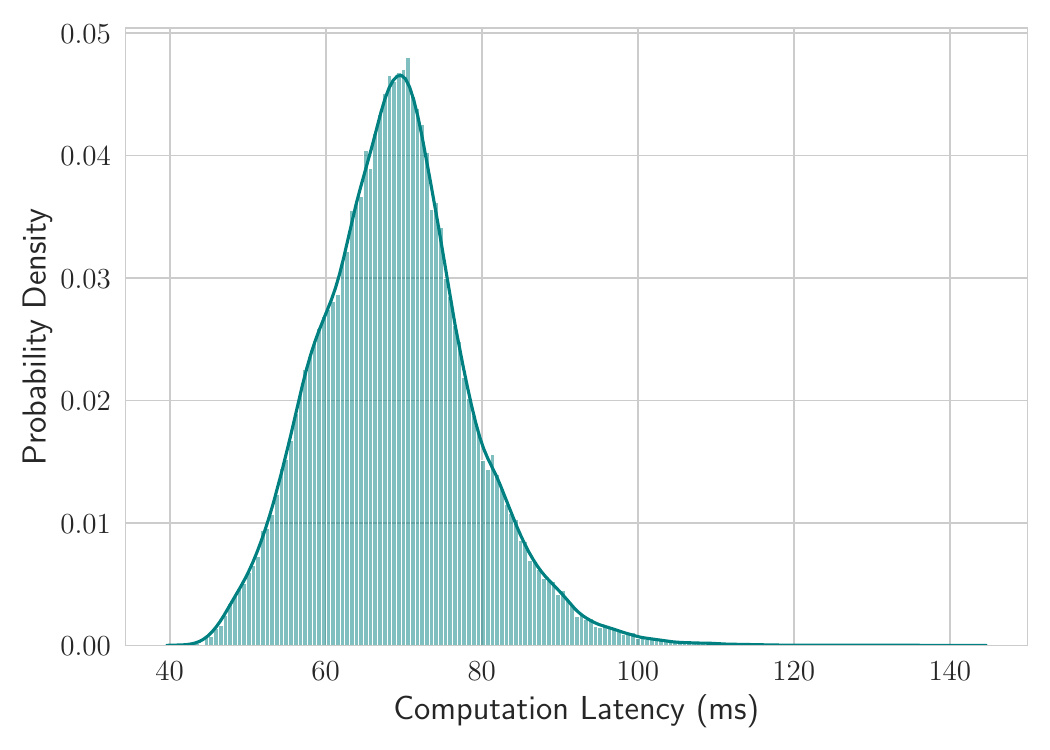}
    \caption{Computation time distribution of detection.}
    \label{fig:comp}
\end{figure}

Two polices based on video content are adopted for comparison. The first is NCC (Normalized Cross Correlation) policy  \cite{apicharttrisorn2019frugal}. NCC refers to the cross-correlation between two regions. A small cross-correlation value suggests that the detected object has significant change, and thus the tracker is likely to be inaccurate. Thus, NCC value is plugged into \eqref{index} as done in Max-Reduction policy. The second is CIB (Current IoU Based-) policy. In CIB policy, we assume the scheduler knows the IoU between the tracking position and the actual position. The IoU value is plugged into \eqref{index} for scheduling. Note that CIB policy requires knowledge of the actual position and thus cannot be implemented in real scenario. We just use it for comparison. 
The parameter $V$ is set to be $0.01$. 

\begin{figure}[htbp]
     \centering
     \begin{subfigure}{0.45\textwidth}
         \centering
         \includegraphics[width=\textwidth]{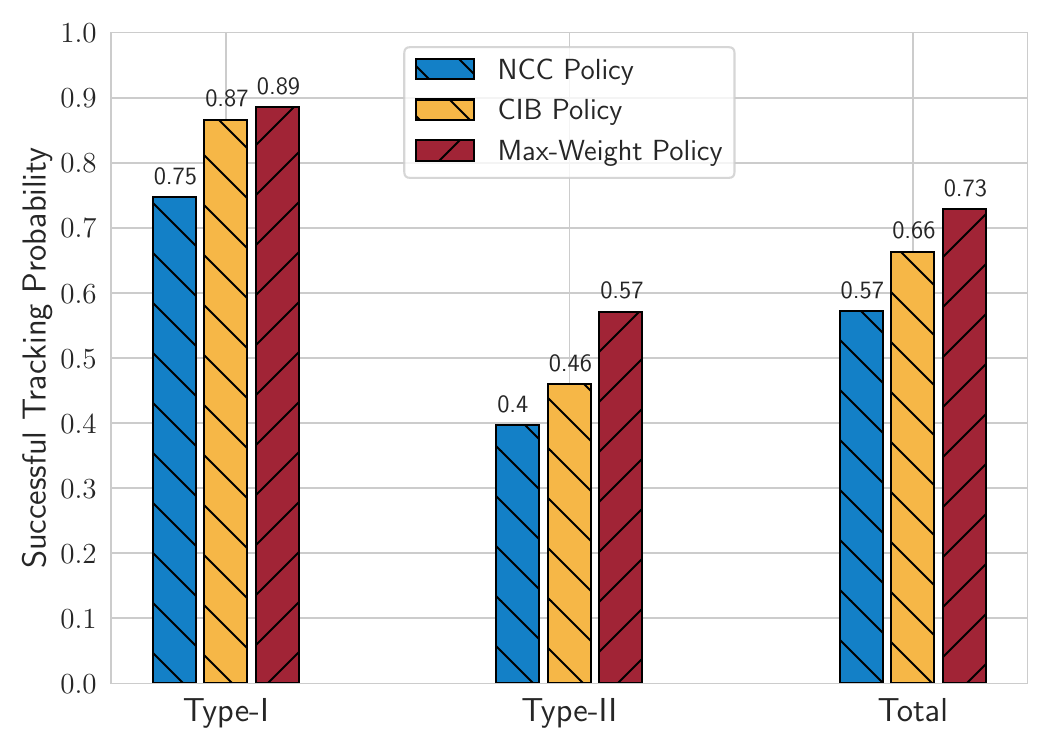}
         \caption{Successful tracking probability results}
         \label{fig:accuracy}
     \end{subfigure}
     \hfill
     \begin{subfigure}{0.45\textwidth}
         \centering
         \includegraphics[width=\textwidth]{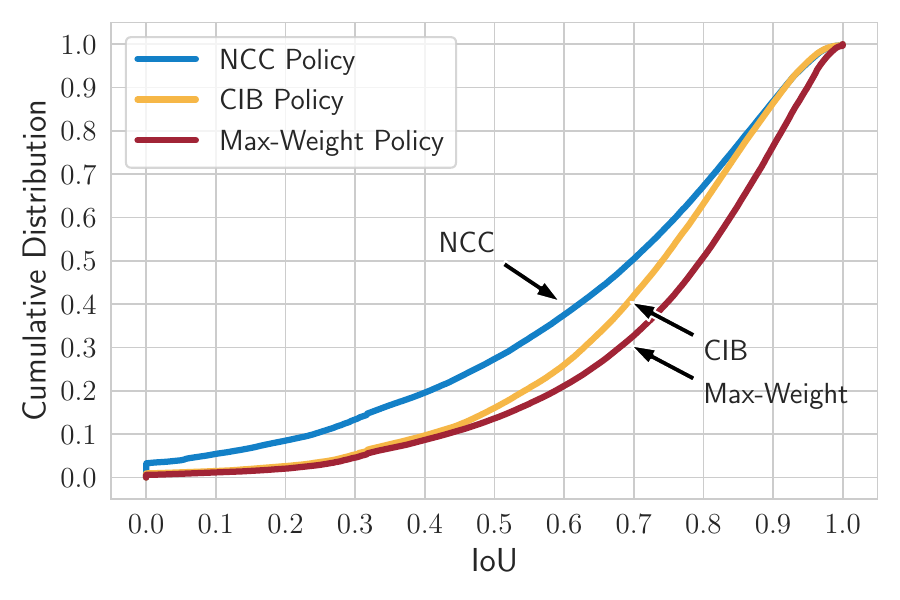}
         \caption{Cumulative distribution of IoU}
         \label{fig:CDF}
     \end{subfigure}
     \hfill
        \caption{Performance comparison between two policies.}
        \label{fig:result}
\end{figure}

To evaluate the performance on ILSVRC17-VID dataset, we randomly take 300 videos from the videos that are not used for profiling. Fig. \ref{fig:accuracy} compares the success tracking probability of Max-Weight policy, NCC and CIB. As shown in this figure, Max-Weight policy outperforms the other two for both two types of devices, and improves the total successful tracking probability by 27\% compared with NCC. In Fig. \ref{fig:CDF}, the cumulative distributions of IoU under these two policies are presented. As we can see, Max-Weight policy enjoys better tracking performance. 

It is surprising to observe in Fig. \ref{fig:result} that CIB policy is worse than the Max-Weight policy, as CIB uses knowledge of the actual IoU. This phenomenon might be due to two reasons. First, CIB policy doesn't take transmission and computation delay into consideration, and this might lead to bad resource allocation. Second, CIB policy only uses current IoU, while the profiling curves in Fig. \ref{fig:fitting} incorporate long-term information. This motivates us to investigate how to represent content semantics from a time perspective.

\section{Conclusion}
\label{sec-5}
To support emerging real-time applications with computation-intensive status update, it is critical to efficiently manage communication and computation resources in the network system to provide as fresh status information as possible. To fully utilize computation resources, we considered a hybrid computation framework where computation tasks can be processed on-device or offloaded to an edge server. Task-specific timeliness requirement was modeled as penalty functions of AoI. We first analyzed the minimum average AoI penalty and formulated an optimization problem to compute the penalty lower bound. Based on the lower bound, we proposed indices to quantify the priorities of local computing and edge computing respectively. Combining energy virtual queue with these indices, we proposed a Max-Weight scheduling policy, inspired by the optimal conditions of the lower bound problem. Extensive simulations showed that our proposed policy has close-to-optimal performance under different penalty functions. We also applied the proposed policy to object tracking tasks on ILSVRC17-VID dataset and improved the tracking accuracy compared with scheduling polices based on video content information.
\appendices

\section{Proof of Lemma \ref{thm_0}, \ref{thm_0_1}}
\label{app-0}
Given a policy $\pi$, let $C(k)$ be the number of update rounds finished by time slot $k$, and $Z_i$ be the time slot when the $i$-th round starts. The average AoI penalty by time slot $K$ is
\begin{equation}
	\label{eq1-1}
	\begin{aligned}
    &\frac{1}{K}\left(\sum_{k=1}^{K}f(h(k))\right) \\
    &=
    \frac{1}{K}
        \left(
            \sum_{c=1}^{C(K)}(F(h^{+}_c)-F(h^{-}_c-1))+R(K)
        \right),
\end{aligned}
\end{equation}

where $h_c^+$ is the peak age of the $c$-th update round, and $h_c^{-}$ be the age at the beginning of this round, and
\begin{align}
    R(K)\triangleq \sum_{h=h^{-}_{C(K)+1}}^{K+h^{-}_{C(K)+1}-Z_{C(K)+1}}f(h).
\end{align}

Update rounds can be further classified based on the type of computation executed during the round. Let $C_{l}(K)$ be the number of rounds that contains local computing, and $C_{t}(K)$ be the number of rounds with edge computing, then
\begin{align}
    \frac{1}{K}\sum_{c=1}^{C(K)}F(h^{+}_c) = \frac{1}{K}\sum_{c=1}^{C_{l}(K)}F(h^{+}_{l,c}) + \frac{1}{K}\sum_{c=1}^{C_{t}(K)}F(h^{+}_{t,c}),
\end{align}
where $h^{+}_{l,c}$ is the peak age in the $c$-th local computing round, and $h^{+}_{t,c}$ is the peak age in the $c$-th edge computing round. As for $h^{-}_c$, note that it equals the total latency in round $c-1$. Thus we can shift the summation by one round and obtain
\begin{equation}
	\begin{aligned}
    &\frac{1}{K}\sum_{c=2}^{C(K)+1}F(h^{-}_c-1)\\
    &=\frac{1}{K}\sum_{c=1}^{C_{l}(K)}F(D_{l,c}-1) + \frac{1}{K}\sum_{c=1}^{C_{t}(K)}F(D_{t,c}+D_{e,c}-1).
	\end{aligned}
\end{equation}

Due to the independence of communication and computation latency in each update round, basic renewal theory yields the following equations:
\begin{equation}
	\begin{aligned}
    &\lim_{K\to\infty} \frac{C_{l}(K)}{K} =  \frac{\rho_{l}(\pi)\overline{E}}{E_{l}\overline{D}_{l}}, ~\text{w.p.1},\\
    &\lim_{K\to\infty} \frac{C_{t}(K)}{K} =  \frac{\rho_{t}(\pi)\overline{E}}{E_{t}\overline{D}_{t}}, ~\text{w.p.1}.
	\end{aligned}
\end{equation}

Obviously, policies with unbounded $R(K)$ as $K\to\infty$ cannot be optimal. Therefore, we only consider policies under which the residual term $R(K)$ satisfies
\begin{align}
    \lim_{K\to\infty}\frac{R(K)}{K} = 0,~\text{w.p.1.} 
\end{align}

Since these time averages converge with probability 1, the Lebesgue Dominated Convergence Theorem \cite{billingsley2013convergence} ensures the time average expectations are the same as the pure time averages. Thus, by letting $K\to\infty$ in \eqref{eq1-1}, the average AoI penalty under policy $\pi$ can be written as 
\begin{equation}
\begin{aligned}
        &\frac{\rho_{l}(\pi)\overline{E}}{E_{l}\overline{D}_{l}}\mathbb{E}_{\pi}[F(h^{+}_{l})-F(D_{l}-1)]\\
        & + 
        \frac{\rho_{t}(\pi)\overline{E}}{E_{t}\overline{D}_{t}}\mathbb{E}_{\pi}[F(h^{+}_{t})-F(D_{t}+D_{e}-1)].
\end{aligned}
\end{equation}

This concludes the proof for Lemma \ref{thm_0}.

On the other hand, based on the definition of $C(k)$, the following two inequalities hold
\begin{align}
    \sum_{c=1}^{C(K)}(h^{+}_c - h^{-}_c+1) \le K,\\
    \sum_{c=1}^{C(K)+1}(h^{+}_c - h^{-}_c+1) \ge K.
\end{align}
Let $K'$ be the time slot when round $C(K)+1$ finishes, then
\begin{align}
    \lim_{K\to\infty}\frac{K'}{K} = 1.
\end{align}
Thus, 
\begin{equation}
\begin{aligned}
	&\lim_{K\to\infty}\frac{1}{K}\sum_{c=1}^{C(K)}(h^{+}_c - h^{-}_c+1) \le 1,\\
	&1 \le \lim_{K\to\infty}\frac{K'}{K}\frac{1}{K'}\sum_{c=1}^{C(K)+1}(h^{+}_c - h^{-}_c+1).
\end{aligned}
\end{equation}
Applying renewal theory again yields
\begin{equation}\label{constraint_2}
	\begin{aligned}
		&\frac{\rho_{t}(\pi)\overline{E}}{E_{t}\overline{D}_{t}}(\mathbb{E}_{\pi}[h^{+}_{t}]-(\overline{D}_{t}+\overline{D}_{e}-1))\\
		&+\frac{\rho_{l}(\pi)\overline{E}}{E_{l}\overline{D}_{l}}(\mathbb{E}_{\pi}[h^{+}_{l}]-(\overline{D}_{l}-1)) = 1.
	\end{aligned}
\end{equation}
Once the constraints in \textbf{P1} are satisfied, so does \eqref{constraint_2}. This concludes the proof for Lemma \ref{thm_0_1}.

\section{Proof of Lemma \ref{thm_1}}
\label{app-1}
To show that \textbf{P4} is a convex problem, we only need to prove that the following is a convex function of $\rho_{l,n}$ and $\rho_{t,n}$,
\begin{align}\label{eq-app-2-1}
	\left(\frac{\rho_{l,n}\overline{E}_n}{E_{l,n}\overline{D}_{l,n}}+\frac{\rho_{t,n}\overline{E}_n}{E_{t,n}\overline{D}_{t,n}}\right)\tilde{F}_n(G_n(\rho_{l,n},\rho_{t,n})).
\end{align}
For simplicity, let's introduce some notations:
\begin{equation}
	\begin{aligned}
		&p\triangleq \overline{D}_{l,n}-1,~q\triangleq \overline{D}_{t,n}+\overline{D}_{e,n}-1, \\
		&x \triangleq \frac{\rho_{l,n}\overline{E}_n}{E_{l,n}\overline{D}_{l,n}},~ y\triangleq \frac{\rho_{t,n}\overline{E}_n}{E_{t,n}\overline{D}_{t,n}}.
	\end{aligned}
\end{equation}

According to \eqref{upper-opt-single}, the function $G_n(\rho_{l,n},\rho_{c,n})$ can be expressed as 
\begin{align}
	G_n(\rho_{l,n},\rho_{c,n}) = \frac{1+px+qy}{x+y}, 
\end{align}
and the target \eqref{eq-app-2-1} becomes
\begin{align}
	(x+y)\tilde{F}_n\left(\frac{1+px+qy}{x+y}\right).
\end{align}

Let $Z(x,y)$ be the perspective transform of $\tilde{F}_n$
\begin{align}
	Z(x,y)\triangleq y\tilde{F}_n\left(\frac{x}{y}\right),~y>0.
\end{align}

Because $\tilde{F}_n(x)$ is convex, its perspective transform $Z(x,y)$ is also convex \cite{boyd2004convex}. Consider the following linear transformation
\begin{align}
	\begin{bmatrix}
		1+px+qy \\ x+y
	\end{bmatrix} = 
	\begin{bmatrix}
		p & q \\
		1 & 1
	\end{bmatrix}
	\begin{bmatrix}
		x \\ y
	\end{bmatrix} + 
	\begin{bmatrix}
		1 \\ 0
	\end{bmatrix},
\end{align}
let $\bm{x}\triangleq (x,y)^T$, and denote the relationship above as $A\bm{x}+\bm{b}$. It is easy to see that 
\begin{align}
	(x+y)\tilde{F}_n\left(\frac{1+px+qy}{x+y}\right) = Z(A\bm{x}+\bm{b}).
\end{align}
According to the composition rule of convex function, the target \eqref{eq-app-2-1} is also a convex function. Thus, \textbf{P4} is a convex optimization problem.

\section{Proof of Theorem \ref{thm-lagrange}}
\label{app-1-2}
According to KKT conditions, the optimal solution satisfies
\begin{align*}
    & a_n \tilde{F}_n(G_n(x_n^*,y_n^*)) + 
    \frac{a_n b_n y_n^* (c_n-d_n) - a_n}{a_n x_n^* + b_n y_n^*} \tilde{f}_n (G_n(x_n^*,y_n^*))\\
    &- a_n v_n + \beta_n^* - \gamma_n^* = 0,~\forall n\in\mathcal{N}, \\
    & b_n \tilde{F}_n(G_n(x_n^*,y_n^*)) + 
    \frac{a_n b_n x_n^* (d_n-c_n) - b_n}{a_n x_n^* + b_n y_n^*} \tilde{f}_n (G_n(x_n^*,y_n^*))\\
    &- b_n w_n + \frac{\overline{E}_n}{E_{t,n}}\alpha^* + \beta_n^* - \nu_n^* = 0,~\forall n\in\mathcal{N}, \\
    & \alpha \left( \sum_{n\in\mathcal{N}} \frac{y_n^*\overline{E}_n}{E_{t,n}}-M \right) = 0, 
    ~ \beta_n^* (x_n^*+y_n^*-1) = 0,\\
    &\gamma_n^* x_n^* = 0,
    ~ \nu_n^* y_n^* =0,~\forall n\in\mathcal{N}.
\end{align*}
Combining the first two equations and removing $\beta_n^*$ gives 
\begin{equation}\label{KKT-1}
    \begin{aligned}
    &\frac{a_n b_n(d_n-c_n)(x_n^*+y_n^*) + a_n - b_n}{a_n x_n^* + b_n y_n^*} \tilde{f}_n (G_n(x_n^*,y_n^*))\\
    &+(b_n-a_n)\tilde{F}_n(G_n(x_n^*,y_n^*))\\
    &+a_n v_n + \gamma_n^* - b_n w_n - \nu_n^* = -\frac{\overline{E}_n}{E_{t,n}}\alpha^*.
\end{aligned}
\end{equation}

Recalling the definition of $G_n$, it can be written as
\begin{align}
    G_n(x_n,y_n) = \frac{1+a_n c_n x_n+b_n d_n y_n }{a_n x_n + b_n y_n}.
\end{align}
With direct computation, we have
\begin{align}\label{eq1-2}
    &\frac{a_n b_n(d_n-c_n)(x_n+y_n) + a_n - b_n}{a_n x_n + b_n y_n} \\
    &= (a_n-b_n)G_n(x_n,y_n) + b_n d_n - a_n c_n.
\end{align}
As shown in \eqref{upper-opt-single}, $G_n(\rho_{l,n},\rho_{t,n})$ is the expected peak age when the energy allocation is $\rho_{l,n},\rho_{t,n}$. Combining this fact with \eqref{eq1-2}, we can express \eqref{KKT-1} as  
\begin{equation}\label{KKT-final}
    \frac{W_{t,n}(h_{t,n})}{E_{t,n}\overline{D}_{t,n}}  -  \frac{W_{l,n}(h_{l,n})}{E_{l,n}\overline{D}_{l,n}}  - \frac{\gamma_n^* - \nu_n^*}{\overline{E}_n} = \frac{\alpha^*}{E_{t,n}}.
\end{equation}
This concludes the proof.

\section{Proof of Theorem \ref{thm-MW}}
\label{app-2}
    First, to maximize \eqref{eq-weight}, the scheduling policy should only consider devices in the set $\mathcal{C}_l\cup \mathcal{C}_t$. We start a simple policy under which 1) all devices in $\mathcal{C}_l$ are scehduled to do local computing, 2) devices in $\mathcal{C}_t - \mathcal{C}_l$ are sort in descending order according to the value of $I_n(k)$, and at most $m(k)$ devices from the top are scheduled to offload.

    Consider two cases:
    \begin{enumerate}
        \item If there are less than $m(k)$ devices in $\mathcal{C}_t - \mathcal{C}_l$, we can reorder devices from $\mathcal{C}_l$ to offload if $I_n(k)\ge 0$ until the channels are all occupied.
        \item If all channels are occupied by devices from $\mathcal{C}_t - \mathcal{C}_l$, taking the device with the largest $I_n(k)$ in $\mathcal{C}_l\cap \mathcal{C}_t$, say device $x$. If the index of device $x$ is larger than one of the devices scheduled to offload, say device $y$, then we can replace $y$ by $x$ to offload, and improve the sum weight \eqref{eq-weight}. Repeating this process finite times will maximize \eqref{eq-weight}. 
    \end{enumerate}

    The process above is equivalent to first sort devices in $\mathcal{C}_t$ in descending by the value of $I_n(k)$, then order at most $m(k)$ devices with $I_n(k)\ge 0$ to offload, corresponding to Line 1-10 in Alg.\ref{Alg-0}. Devices left in $\mathcal{C}_l$ will be scheduled to do local computing, corresponding to Line 11-13 in Alg.\ref{Alg-0}.

\section{Proof of Proposition \ref{prop-2}}
\label{app-prop-2}
The proof is divided into three parts. In the first part, the weight in \eqref{eq-weight} is derived by computing the drift of an expectation term. Next, we show how to obtain a randomized policy $\pi_{\text{R}}$. Finally, we prove \eqref{performance} by comparing $\pi_{\text{MW}}$ with $\pi_{\text{R}}$.

\subsection{Drift Expression}

We first consider the drift of the quadratic virtual queue functions, defined as $\Delta(k)$:
\begin{align}
    \Delta(k) &\triangleq \frac{1}{2}\mathbb{E}\left[\left.\sum_{n\in\mathcal{N}} Q_n^2(k+1)-\sum_{n\in\mathcal{N}} Q_n^2(k)~\right|~\mathcal{H}(k)\right],
\end{align}
recalling that $\mathcal{H}(k)$ represents the history up to time slot $k$. $\Delta(k)$ satisfies that  
\begin{equation}\label{drift-1}
	\begin{aligned}
	    \Delta(k) 
	    \le& \frac{1}{2}\mathbb{E}\left[\left. \sum_{n\in\mathcal{N}}(E_n(k)-\overline{E}_n)^2 \right| \mathcal{H}(k)\right] \\
	    &+ \mathbb{E}\left[\left. \sum_{n\in\mathcal{N}}Q_n(k)(E_n(k)-\overline{E}_n) \right| \mathcal{H}(k)\right]\\
	    \le& \frac{1}{2}\sum_{n\in\mathcal{N}}(\max(E_{l,n},E_{t,n})-\overline{E}_n)^2 \\
	    &+ \mathbb{E}\left[\left. \sum_{n\in\mathcal{N}}Q_n(k)(E_n(k)-\overline{E}_n) \right| \mathcal{H}(k)\right].
	\end{aligned}
\end{equation}

As for the age part, it is captured by the following function
\begin{align}
    P(k) \triangleq \sum_{n\in\mathcal{N}} ((h_n(k)-1)f_n(h_n(k)-1) - \tilde{F}_n(h_n(k)-1)).
\end{align}

When $D_{l,n} = D_{t,n} = 1, D_{e,n} = 0,~\forall n\in\mathcal{N}$, the indices in \eqref{eq1-3} and \eqref{eq1-4} are reduced to be 
\begin{align}
	W_n(h_n(k)) \triangleq h_n(k)f_n(h_n(k)) - \tilde{F}_n(h_n(k)).
\end{align}
The drift is 
\begin{equation}
	\begin{aligned}
		\Gamma(k) \triangleq& \mathbb{E}\left[P(k+1)-P(k)|\mathcal{H}(k)\right] \\
	    =& \sum_{n\in\mathcal{N}} (h_n(k)-1)(f_n(h_n(k)) - f_n(h_n(k)-1)\\
	    &+ \sum_{n\in\mathcal{N}}(f_n(h_n(k))+\tilde{F}_n(h_n(k)-1) - \tilde{F}_n(h_n(k))) \\
	    &- \sum_{n\in\mathcal{N}}W_n(h_n(k))\mathbb{E}\left[(u_{t,n}(k)+u_{l,n}(k))|\mathcal{H}(k)\right]. 
	\end{aligned}
\end{equation}  

Because
\begin{align}
	&\tilde{F}_n(h_n(k)-1) - \tilde{F}_n(h_n(k)) \le -f_n(h_n(k)-1),
\end{align}
the drift term $\Gamma(k)$ satisfies
\begin{equation} \label{drift-2}
	\begin{aligned}
		\Gamma(k) \le& \sum_{n\in\mathcal{N}} h_n(k)(f_n(h_n(k)) - f_n(h_n(k)-1))\\
		&- \sum_{n\in\mathcal{N}}W_n(h_n(k))\mathbb{E}\left[(u_{t,n}(k)+u_{l,n}(k))|\mathcal{H}(k)\right].
	\end{aligned}
\end{equation}

Let $B \triangleq \frac{V}{2}\sum_{n\in\mathcal{N}}(\max(E_{l,n},E_{t,n})-\overline{E}_n)^2$, combining \eqref{drift-1} and \eqref{drift-2} yields
\begin{equation}\label{drift_sum}
	\begin{aligned}
		&\Gamma(k) + V \Delta(k) \\
		&\le B - V\sum_{n\in\mathcal{N}}Q_n(k)\overline{E}_n \\
		&+ \sum_{n\in\mathcal{N}} h_n(k)(f_n(h_n(k)) - f_n(h_n(k)-1)  \\
		&- \sum_{n\in\mathcal{N}} (W_n(h_n(k))-VE_{l,n}Q_n(k)) \mathbb{E}\left[u_{l,n}(k)|\mathcal{H}(k)\right] \\
		&- \sum_{n\in\mathcal{N}} (W_n(h_n(k))-VE_{t,n}Q_n(k)) \mathbb{E}\left[u_{t,n}(k)|\mathcal{H}(k)\right].
	\end{aligned}
\end{equation}
Based on Theorem \ref{thm-MW}, the policy $\pi_{\text{MW}}$ makes decisions to minimize the right-hand-side of \eqref{drift_sum} at each time slot.

\subsection{Randomized Policy}

In this part, we show how to construct a randomized policy $\pi_{\text{R}}$. Since there are $M$ channels, the candidate scheduling action set is defined as $\mathcal{A}$
\begin{align}
	\mathcal{A} \triangleq \left\{(\bm{u}_{l},\bm{u}_t) \left| \sum_{n\in\mathcal{N}}u_{t,n}\le M; u_{l,n}+u_{t,n}\le 1, \forall n\in\mathcal{N}\right. \right\}.
\end{align}

Let's define a randomized policy $\pi_{\text{R}}$ that takes action $a\in\mathcal{A}$ with probability $p_{\pi_{\text{R}}}(a)$ in each time slot. Since $\pi_{\text{R}}$ is independent of the history $\mathcal{H}(k)$, $\mathbb{E}_{\pi_{\text{R}}}[u_{t,n}(k)|\mathcal{H}(k)]$ and $\mathbb{E}_{\pi_{\text{R}}}[u_{l,n}(k)|\mathcal{H}(k)]$ are stationary, and we denote them as $p_{l,n}$ and $p_{t,n}$ respectively. Note that we cannot simply define a randomized policy and state that it would schedule device $n$ to do local computing with probability $p_{l,n}$, and to offload with probability $p_{t,n}$. Because of the channel constraint, such a vanilla policy might be infeasible.

Let $\mathcal{P}_{\text{R}} \triangleq \{(\bm{p}_{l},\bm{p}_t)\}$ be the set of probability distributions achievable by a randomized policy. The associated energy allocation scheme set $\mathcal{E}_{\text{R}}$ is defined as  
\begin{equation}
	\begin{aligned}
 	\mathcal{E}_{\text{R}} \triangleq 
 	&\left\{ 
 	(\bm{\rho}_l,\bm{\rho}_t)
	 	\left| 
		 	\rho_{l,n} = \frac{E_{l,n}\overline{D}_{l,n}p_{l,n}}{\overline{E}_n},\right.\right.\\
		 	&\left.\rho_{t,n} = \frac{E_{t,n}\overline{D}_{t,n}p_{t,n}}{\overline{E}_n}, 
		 	(\bm{p}_{l},\bm{p}_t) \in \mathcal{P}_{\text{R}}
 	\right\}.
	\end{aligned}
\end{equation}
Let $\mathcal{E}$ be the set of all possible energy allocation schemes under any stationary policy $\pi$. According to \cite[{Lemma~4.17}]{neely2010stochastic}, we have $\mathcal{E} = \mathcal{E}_{\text{R}}$. 

Now, let $\pi_{\text{opt}}$ be the optimal stationary scheduling policy, and its associated energy allocation vectors are $\bm{\rho}_l^*$ and $\bm{\rho}_t^*$, plugging $\bm{\rho}_l^*$ and $\bm{\rho}_t^*$ into to the optimization target of \eqref{upper-opt} yields a lower bound of the minimum AoI penalty. Let $\bm{p}_{l}^*$ and $\bm{p}_{t}^*$ be the corresponding scheduling probability\footnote{Note that we cannot directly using the solution of the lower bound problem \eqref{upper-opt} to construct randomized policy because  energy allocation scheme given by the solution may not be achievable.}. According to \eqref{upper-opt}, we obtain AoI penalty lower bound, 
\begin{align}\label{LB}
	J^{*} = \sum_{n\in\mathcal{N}}\frac{\alpha_n}{p+1} \left(\frac{1}{p_{l,n}^{*}+p_{t,n}^{*}}\right)^{p}. 
\end{align}

Because $\pi_{\text{MW}}$ minimizes the right-hand-side of \eqref{drift_sum}, we have 
\begin{equation}
	\begin{aligned}
		\Gamma(k) + V \Delta(k) \le& B + \sum_{n\in\mathcal{N}} h_n(k)(f_n(h_n(k)) - f_n(h_n(k)-1)) \\
		&- V\sum_{n\in\mathcal{N}}Q_n(k)(\overline{E}_n - p_{l,n}^*E_{l,n} - p_{t,n}^*E_{t,n})\\
		&- \sum_{n\in\mathcal{N}}(p_{l,n}^*+p_{t,n}^*)W_n(h_n(k)).
	\end{aligned}
\end{equation}

Because $p_{l,n}^*E_{l,n} + p_{t,n}^*E_{t,n} \le \overline{E}_n$, the inequality above is relaxed to be 
\begin{equation}\label{drift_sum_random}
	\begin{aligned}
		&\Gamma(k) + V \Delta(k) \\
		&\le B + \sum_{n\in\mathcal{N}} h_n(k)(f_n(h_n(k)) - f_n(h_n(k)-1)) \\
		&- \sum_{n\in\mathcal{N}}(p_{l,n}^*+p_{t,n}^*)W_n(h_n(k)).
	\end{aligned}
\end{equation}

\subsection{Performance Derivation}

In this part, we will prove the inequality \eqref{performance}. Let the average AoI penalty of device $n$ under policy $\pi_{\text{MW}}$ be $J_n^{\pi_{\text{MW}}}$. Note that the AoI penalty function for device $n$ is $f_n(h)=\alpha_n h^p$. In the $i$-th round of status update, $h_n(k)$, the AoI of device $n$, will increase from $1$ to a peak AoI $\hat{h}^{(i)}_n$. The average AoI penalty in this round is 
\begin{align}
	\alpha_n\sum_{h=1}^{\hat{h}^{(i)}_n} h^{p} \ge \alpha_n \int_{0}^{\hat{h}^{(i)}_n} h^{p} dh = \frac{\alpha_n}{p+1} (\hat{h}^{(i)}_n)^{p+1}.
\end{align}
Let $\hat{h}_n$ be the distribution of the peak AoI of device $n$, we have 
\begin{align}
	J_n^{\pi_{\text{MW}}} = \lim_{K\to\infty}\frac{1}{K}\mathbb{E}_{\pi_{\text{MW}}} \left[\sum_{k=1}^{K}h_n^p(k)\right] \ge 
	\frac{\alpha_n}{p+1}\frac{\mathbb{E}_{\pi_{\text{MW}}}[\hat{h}_n^{p+1}]}{\mathbb{E}_{\pi_{\text{MW}}}[\hat{h}_n-1]}.
\end{align}
As for the second term in \eqref{drift_sum_random}, we have 
\begin{equation}
	\begin{aligned}
	& \lim_{K\to\infty}\frac{1}{K}\mathbb{E}_{\pi_{\text{MW}}} \left[\sum_{k=1}^{K} h_n(k)(f_n(h_n(k)) - f_n(h_n(k)-1)) \right] \\
	& = \alpha_n\frac{\mathbb{E}_{\pi_{\text{MW}}}[\hat{h}_n^{p+1}]}{\mathbb{E}_{\pi_{\text{MW}}}[\hat{h}_n-1]} - J_n^{\pi_{\text{MW}}}\\
	& \le pJ_n^{\pi_{\text{MW}}}.
\end{aligned}
\end{equation}

Taking expectation under $\pi_{\text{MW}}$ on both sides of \eqref{drift_sum_random}, taking average up to time slot $K$ and letting $K$ to $\infty$, we have
\begin{equation}\label{drift_final}
\begin{aligned}
	&\lim_{K\to\infty} \frac{V}{2K} \mathbb{E}_{\pi_{\text{MW}}}\left[ \sum_{n\in\mathcal{N}}(Q_n^2(k+1) - Q_n^2(1))\right] \\
	&\le B + pJ^{\pi_{\text{MW}}}  - \lim_{K\to\infty} \frac{1}{K} \mathbb{E}_{\pi_{\text{MW}}}\left[P(k)-P(1)\right] \\
	&- \lim_{K\to\infty}\frac{1}{K}\mathbb{E}_{\pi_{\text{MW}}}\left[\sum_{n\in\mathcal{N}}\sum_{k=1}^{K}(p_{l,n}^*+p_{t,n}^*)W_n(h_n(k))\right].
\end{aligned}
\end{equation}

According to Theorem \ref{thm_4}, the left-hand-side of \eqref{drift_final} is  $0$ as $K\to\infty$, therefore,
\begin{equation}
\begin{aligned}
	&\lim_{K\to\infty}\frac{1}{K}\mathbb{E}_{\pi_{\text{MW}}}\left[\sum_{n\in\mathcal{N}}\sum_{k=1}^{K}(p_{l,n}^*+p_{t,n}^*)W_n(h_n(k))\right] \\
	&\le B + pJ_n^{\pi_{\text{MW}}}.
\end{aligned}
\end{equation}

Next, we study the property of the function $x\tilde{f}(x)-\tilde{F}(x)$. 
\begin{lemma}\label{lemma-1}
	Consider a differentiable injective function $\tilde{f}:\mathbb{R}\rightarrow \mathbb{R}$, its inverse is denoted as $\tilde{f}^{-1}$, and its integral is $\tilde{F}(x)$. If $\tilde{f}(x)$ is increasing, $y\tilde{f}^{-1}(y)-\tilde{F}(\tilde{f}^{-1}(y))$ is convex. 
\end{lemma}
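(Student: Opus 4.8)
The plan is to prove the claim by showing that the scalar function $g(y) \triangleq y\tilde{f}^{-1}(y) - \tilde{F}(\tilde{f}^{-1}(y))$ has a nondecreasing first derivative, which for a one-dimensional differentiable function is equivalent to convexity. The guiding observation is that $g$ is precisely the Legendre transform (convex conjugate) of $\tilde{F}$, so its convexity is structural rather than accidental; I will make this explicit by a short direct differentiation.

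First I would introduce the substitution $x = \tilde{f}^{-1}(y)$, equivalently $y = \tilde{f}(x)$, so that $g(y) = x y - \tilde{F}(x)$ with $x = x(y)$. Since $\tilde{f}$ is increasing, differentiable, and injective, it has $\tilde{f}' > 0$, and the inverse function theorem gives $x'(y) = 1/\tilde{f}'(x)$. Recalling that $\tilde{F}$ is an antiderivative of $\tilde{f}$, i.e. $\tilde{F}'(x) = \tilde{f}(x)$, the chain rule yields
\[
g'(y) = x(y) + \bigl(y - \tilde{f}(x(y))\bigr)\,x'(y) = x(y),
\]
because the bracketed term vanishes identically thanks to $\tilde{f}(x(y)) = y$. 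Hence $g'(y) = \tilde{f}^{-1}(y)$. Since an increasing function has an increasing inverse, $g'$ is nondecreasing, and therefore $g$ is convex, which is exactly the assertion of the lemma.

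The only delicate point, and the step I expect to require care, is the regularity needed to differentiate the inverse: the argument above implicitly uses that $\tilde{f}^{-1}$ is differentiable, which is guaranteed only where $\tilde{f}' > 0$. To keep the proof robust I would either note that $\tilde{f}$ being strictly increasing and differentiable already forces $\tilde{f}' > 0$ on the relevant domain, or, more cleanly, avoid smoothness altogether by invoking the conjugacy characterization directly: since $\tilde{f}$ is increasing, $\tilde{F}$ is convex, so $x y - \tilde{F}(x)$ is concave in $x$ and is maximized at the stationary point $y = \tilde{f}(x)$, i.e. $x = \tilde{f}^{-1}(y)$. Consequently $g(y) = \sup_{x}\,(x y - \tilde{F}(x))$, a pointwise supremum of affine functions of $y$, which is convex without any differentiability hypothesis. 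I would present the differentiation as the main line and cite this supremum representation as the fallback that settles the boundary regularity issue.
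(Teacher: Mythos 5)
Your main argument is correct and is essentially the paper's own proof: differentiate $S(y)=y\tilde{f}^{-1}(y)-\tilde{F}(\tilde{f}^{-1}(y))$, observe the cancellation leaving $S'(y)=\tilde{f}^{-1}(y)$, and conclude convexity from monotonicity of the inverse. Your Legendre-transform fallback ($S(y)=\sup_x\,(xy-\tilde{F}(x))$ as a supremum of affine functions) is a sound extra that sidesteps the differentiability of $\tilde{f}^{-1}$, a regularity point the paper glosses over, but it does not change the substance of the argument.
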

\begin{proof}
	Let $S(y)\triangleq y\tilde{f}^{-1}(y)-\tilde{F}(\tilde{f}^{-1}(y))$. It's derivative is 
	\begin{align}
		\frac{\text{d} S(y)}{\text{d} y} = y \frac{\text{d} \tilde{f}^{-1}(y)}{\text{d} y} + \tilde{f}^{-1}(y) - y \frac{\text{d} \tilde{f}^{-1}(y)}{\text{d} y} = \tilde{f}^{-1}(y).
	\end{align}
	Because $\tilde{f}(x)$ is increasing, so is $\tilde{f}^{-1}(y)$. Therefore, the derivative of $S(y)$ is increasing, and thus $S(y)$ is convex.
\end{proof}

Because $\tilde{f}(x) = f(x)$ when $x\in\mathbb{N}$, we have $W_n(h_n(k)) = S_n(f_n(h_n(k)))$. With this equation, \eqref{drift_final} can be written as 
\begin{equation}\label{drift_final_2}
\begin{aligned}
	&B + pJ^{\pi_{\text{MW}}} \\
	&\ge \lim_{K\to\infty}\frac{1}{K}\mathbb{E}_{\pi_{\text{MW}}}\left[\sum_{n\in\mathcal{N}}\sum_{k=1}^{K}(p_{l,n}^*+p_{t,n}^*)S_n(f_n(h_n(k)))\right] \\
	&\overset{(a)}{\ge}  \sum_{n\in\mathcal{N}} (p_{l,n}^*+p_{t,n}^*)S_n\left(\lim_{K\to\infty}\frac{1}{K}\sum_{k=1}^{K}\mathbb{E}_{\pi_{\text{MW}}}\left[ h_n(k)\right]\right) \\
	&= \sum_{n\in\mathcal{N}} (p_{l,n}^*+p_{t,n}^*)S_n\left(J_n^{\pi_{\text{MW}}}\right),
\end{aligned}
\end{equation}
where inequality $(a)$ is due to Jenson's inequality.

Denoting $\tilde{f}_n^{-1}(J_n^{\pi_{\text{MW}}})$ by $\overline{h}_n$, we have
\begin{align}
	S_n\left(J_n^{\pi_{\text{MW}}}\right) = W_n(\overline{h}_n) = \frac{\alpha_n p}{p+1} \overline{h}_n^{p+1}.
\end{align}
Then, \eqref{drift_final_2} becomes
\begin{align}
	\frac{ p}{p+1} \sum_{n\in\mathcal{N}} \alpha_n(p_{l,n}^*+p_{t,n}^*)\overline{h}_n^{p+1} \le B + pJ^{\pi_{\text{MW}}}.
\end{align}
Multiplying both sides by $(J^*)^{\frac{1}{p+1}}$, the lower bound from \eqref{LB}, yields
\begin{equation}
	\begin{aligned}
	&(J^*)^{\frac{1}{p+1}}
		\left(\frac{p}{p+1}\sum_{n\in\mathcal{N}}\alpha_n(p_{l,n}^{*}+p_{t,n}^{*})\overline{h}_n^{p+1}\right)^{\frac{p}{p+1}} \\
		&\le 
		(J^*)^{\frac{1}{p+1}}\left(B+p J^{\pi_{\text{MW}}}\right)^{\frac{p}{p+1}}. 
	\end{aligned}
\end{equation}
Applying H\"older's inequality to the left hand side of the above gives
\begin{align}\label{drift_final_3}
	p^{\frac{p}{p+1}}(p+1)^{-1}\sum_{n\in\mathcal{N}}\alpha_n \overline{h}_n^p \le (J^*)^{\frac{1}{p+1}}\left(B + p J^{\pi_{\text{MW}}}\right)^{\frac{p}{p+1}}.
\end{align}
Because $\overline{h}_n = \tilde{f}_n^{-1}(J_n^{\pi_{\text{MW}}})$, rearranging \eqref{drift_final_3} yields
\begin{align}
    \left(\frac{J^{\pi_{\text{MW}}}}{p+1}\right)^{p+1} \le J^*\left(\frac{B}{p}+J^{\pi_{\text{MW}}}\right)^{p}.
\end{align}


\bibliographystyle{IEEEtran}
\end{document}